\def\x{X}
\def\y{Y}
\def\z{Z}
\def\i{{\bf i}}
\def\j{{\bf j}}
\def\k{{\bf k}}
\def\l{{\bf l}}
\newcommand{\bea}{\begin{eqnarray}}
\newcommand{\eea}{\end{eqnarray}}
\def\bi{\begin{itemize}}
\def\ei{\end{itemize}}
\def\bc{\begin{center}}
\def\ec{\end{center}}
\def\C{\hbox{$\mit I$\kern-.7em$\mit C$}}
\def\R{\hbox{$\mit I$\kern-.6em$\mit R$}}
\def\ket#1{|#1\rangle}
\newcommand{\one}{\mbox{$1 \hspace{-1.0mm}  {\bf l}$}}
\def\tr{\mathrm{tr}}
\def\ket#1{\left| #1\right>}
\def\bra#1{\left< #1\right|}
\newcommand{\proj}[1]{\ket{#1}\bra{#1}}
\newtheorem{theorem}{Theorem}
\newtheorem{corollary}[theorem]{Corollary}
\newtheorem{lemma}[theorem]{Lemma}
\begin{document}

\title{Local unitary equivalence and entanglement of multipartite pure states}

\author{B. Kraus}

\affiliation{Institute for Theoretical Physics, University of
Innsbruck, Austria}

\begin{abstract}

The necessary and sufficient conditions for the equivalence of
arbitrary $n$--qubit pure quantum states under Local Unitary (LU)
operations derived in [B. Kraus Phys. Rev. Lett. {\bf 104}, 020504
(2010)] are used to determine the different LU--equivalence classes
of up to five--qubit states. Due to this classification new
parameters characterizing multipartite entanglement are found and their physical interpretation
is given. Moreover, the method is used to derive examples of two
$n$--qubit states (with $n>2$ arbitrary) which have the properties
that all the entropies of any subsystem coincide, however, the
states are neither LU--equivalent nor can be mapped into each other
by general local operations and classical communication.
\end{abstract}
\maketitle

\section{Introduction}

The subtle properties of multipartite entangled states
allow for many fascinating applications of quantum
information, like one--way quantum computing, quantum error
correction, and quantum secret sharing \cite{Gothesis97,RaBr01}.
The theory of many--body states plays also an
important role in other fields of physics which deal with many-body systems \cite{AmFa08}. Thus, the investigation of the non--local properties of quantum states is at the heart of quantum information theory. Compared to the bipartite case, which is well understood, the multipartite case is much more complex due to the exponential grows in the dimension of the Hilbert space. Despite its relevance and the enormous effort of theorists, many problems regarding multipartite entanglement are still unsolved \cite{HoHo07}. Several entanglement measures for
multipartite states, like the tangle \cite{CoKu00}, the Schmidt measure \cite{Sm}, the localizable entanglement \cite{VePoCi04}, or
geometric measure of entanglement \cite{GM} have been introduced. Moreover, different
classes of entangled states have been identified \cite{DuViCi00},
and a normal form of multipartite states has been presented
\cite{Ves}.
However, even for the simplest case of three qubits the
entanglement properties are still not completely understood. One of the main reasons for the lack of knowledge is, arguably, that for many--body entangled states we do have only few applications \cite{HoHo07}. This
results into the existence of few known operational entanglement
measures.

One approach to gain insight into the complicated structure of multipartite states is to consider a restricted class of states, like for instance stabilizer states \cite{Gothesis97}, matrix--product states \cite{Vidal}, projected entangled pair states \cite{PVC06}, Locally Maximally Entangleable States (LMESs) \cite{KrKr08}, or Gaussian state \cite{BrLo05}. Considering a restricted set of states enabled researchers to gain a
lot of intuition about the usefulness and manipulation of
them. This knowledge in turn led to many of the
fascinating applications of multipartite states.

Another way to gain insight into the entanglement properties of quantum
states is to consider their interconvertability. That is, given two
states $\ket{\Psi}$, $\ket{\Phi}$ the question is whether or not
$\ket{\Psi}$ can be transformed into $\ket{\Phi}$ by local
operations \cite{HoHo07}. One particularly interesting case is the LU-equivalence of
multipartite states. We say that a $n$--partite state, $\ket{\Psi}$
is LU--equivalent to $\ket{\Phi}$ ($\ket{\Psi}\simeq_{LU}
\ket{\Phi}$) if there exist local unitary operators, $U_1,\ldots,
U_n$, such that $\ket{\Psi}=U_1\otimes \cdots \otimes U_n
\ket{\Phi}$. Note that two states which are LU--equivalent are
equally useful for any kind of application and they posses
precisely the same amount of entanglement. Another insight is gained by considering more general operations, like (deterministic) local operations and classical communication (LOCC). Since the implementation of such operations does not consume entanglement, a state, $\ket{\Psi}$ which can be mapped into the state $\ket{\Phi}$ by LOCC is necessarily at least as entangled as $\ket{\Phi}$. Thus, all these investigations of convertibility lead to a new insight into general problem of
characterizing the different types of entangled quantum states.

In this article we will mainly focus on the LU--equivalence of multipartite states. Local polynomial invariants have been introduced to distinguish the different LU--equivalence classes \cite{GrRo98}. However, even though it is known that it is
sufficient to consider only a finite set of them, this complete
finite set is known only for very few simple cases. In \cite{Kr09} a method to solve the LU--equivalence problem for
arbitrary $n$--qubit states has been presented. There, an algorithm which determines the local unitaries, which map the states into each other (if they exist) has been derived. Within this algorithm, different classes of states, which are easily characterized, are distinguished. It has been shown that two states which are within two different classes cannot have the same entanglement.

Here, we will use the algorithm presented in \cite{Kr09} in order to investigate the non--local properties of multipartite states. We will present the LU--equivalence classes of few--body systems and obtain a new insight into multipartite entanglement. The main results derived here will be summarized in Sec \ref{Secresults}.

The sequel of the paper is organized as follows. After presenting  the main results of this article (Sec \ref{Secresults}), we review the necessary and
sufficient conditions for LU--equivalence derived in \cite{Kr09}.
In Section \ref{SecAdd} we will derive some additional methods to determine the local unitaries (if they exist) which interconvert the two states. In Section \ref{secExamples} we will characterize the LU-equivalence classes of up to 4
qubits. For five--qubit states we consider the most challenging  class (for using the algorithm) and show how the local unitaries can be determined then. For an arbitrary $n$ (with $n>2$) the existence of $n$--qubit states, $\ket{\Psi}$ which are not LU--equivalent to their complex conjugate will be shown by presenting examples in Section \ref{SecLOCC}. In Section \ref{SecMixed} it will be shown how the algorithm can be employed to solve the LU--equivalence problem for certain mixed states and also states which describe $d$--level systems. The new insight gained into multipartite entanglement will be discussed in Section \ref{SecEntanglement}.

Throughout this paper the following notation is used. The Pauli operators will mainly be denoted by $X,Y,Z$. Whenever we need the whole set of Pauli operators we will use the notation $\Sigma_1=X,\Sigma_2=Y,$ and $,\Sigma_3=Z$ and $H$ denotes the Hadamard
transformation. Otherwise, the subscript of an operator will always denote the
system it is acting on, or the system it is describing. The reduced
states of system $i_1,\ldots i_k$ of $\ket{\Psi}$ ($\ket{\Phi}$)
will always be denoted by $\rho_{i_1\ldots i_k}$
($\sigma_{i_1\ldots i_k}$) resp., i.e. $\rho_{i_1\ldots
i_k}=\tr_{\neg i_1\ldots \neg i_k}(\proj{\Psi})$. We denote by
${\bf i}$ the classical bit--string $(i_1,\ldots, i_n)$ with
$i_k\in\{0,1\}$ $\forall k\in \{1,\ldots, n\}$ and $\ket{{\bf
i}}\equiv\ket{i_1,\ldots, i_n}$ denotes the computational basis.
Normalization factors as well as the tensor product symbol will be
omitted whenever it does not cause any confusion and $\one$ will
denote the normalized identity operator. The eigenvalues of some
matrix $M$ will be denoted by $\mbox{eig}(M)$. For a subsystem $A$ we will denote by $E_A(\ket{\Psi})$ the
bipartite entanglement between $A$ and the remaining
systems measured with the Von Neumann entropy of the reduced state,
$\rho_A$. For instance, $E_i(\ket{\Psi})=S(\rho_i)$ will denote the
entanglement between qubit $i$ and the remaining $n-1$ qubits.
As commonly used, the states $\ket{\Phi^{\pm}}=\ket{00}\pm \ket{11}$, $\ket{\Psi^{\pm}}=\ket{01}\pm \ket{10}$ denote the Bell basis. The state $\ket{\Psi^\ast}$ will denote the complex conjugate of the state $\ket{\Psi}$ in the computational basis and $M^T$ will denote the transpose of an operator $M$ in the computational basis.


\section{Main results}

\label{Secresults}

On the one hand, the criterion for LU--equivalence \cite{Kr09} will be used to characterize the LU--equivalence classes for few--body states. On the other hand, it will be employed to shed new light on multipartite entanglement. Furthermore, the criterion will be generalized to certain mixed states and also states which describe $d$--level systems. The main results derived here can be summarized as follows.

i) Characterization of LU--equivalence classes: The LU--equivalence classes of quantum states describing up to five qubits will be characterized. For two, three and four--qubit states all the classes will be explicitly derived. For five--qubit states, whose classification would work analogously, only the most challenging subset of states will be considered. It will be explicitly shown how the algorithm can be used to determine the local unitaries (if they exist) which transform one state into the other.

ii) New insight into multipartite entanglement:

a) The algorithm presented in \cite{Kr09} distinguishes between different classes of states, like the class of states with $\rho_{12}=\rho_1\otimes \one_2$ and the one with $\rho_{12}\neq \rho_1\otimes \one_2$. It will be shown how this classification enables us to gain a new insight into multipartite entanglement. For instance, one class would be the one where one of the two--qubit reduced states is completely mixed. For four--qubit states it will be shown that this class is completely characterized by only three non--local parameters. Moreover, it will be proven that two states within this class are LU--equivalent iff the corresponding sets of three parameters coincide. Naturally, all the entanglement contained in a state within this class is
also determined by those three parameters to which the following operational meaning can be given. Recall that the completely positive map (CPM), ${\cal E}_\Psi$, corresponding to a state $\ket{\Psi}$ via the Choi--Jamio\l kowski isomorphism \cite{CiDuKrLe00} can be implemented using a system prepared in the state $\ket{\Psi}$ and local operations. It will be shown that the non--local content of the CPM, ${\cal E}_\Psi$, is characterized by the three parameters mentioned above and vice versa. This leads to the new approach of characterizing the entanglement of a multipartite state by the entangling capability of the operation which can be implemented using the state as the only non--local resource.

This suggests a new method to characterize the entanglement contained in an arbitrary multipartite state: First divide the Hilbert space into the entanglement classes resulting from the algorithm in \cite{Kr09}. Note that these classes can be easily characterized. Then, the entanglement of a state within a certain class should be qualified and quantified. Probably, the different classes might also lead to different
applications. For instance, for error correction, one way quantum
computing and quantum secret sharing, we have that all the employed
states have the property that all single qubit reduced states are completely
mixed. 

b) The other new insight into multipartite entanglement which we will derive here using the LU--equivalence criterion is the following. For any $n>2$ examples of $n$--qubit states, $\ket{\Psi}$ and
$\ket{\Phi}$ which have the properties that for any subsystem $A$, composed out of arbitrary many qubits, the eigenvalues of the reduced states, $\rho_A=\tr_{\neg A}(\proj{\Psi})$ and $\sigma_A=\tr_{\neg A}(\proj{\Phi})$
coincide, will be presented. Therefore, all the bipartite entanglement in those two states, measured with the von Neumann entropy of the reduced states, coincide. It will be shown, however, that the states are neither LU--equivalent nor
LOCC comparable. Therefore, neither $\ket{\Psi}$ can be mapped into $\ket{\Phi}$ by LOCC nor vice versa. Surprisingly, in those examples we will have $\ket{\Phi}=\ket{\Psi^\ast}$, where
$\ket{\Psi^\ast}$ denotes the complex conjugate of $\ket{\Psi}$ in
the computational basis. The fact that $\ket{\Psi}$ and $\ket{\Psi^\ast}$ can have so different non--local properties does not seem very physical.  As a consequence of the existence of these states it will be suggested to divide the Hilbert space into two subsets, in case $\ket{\Psi}$ is not LU--equivalent to $\ket{\Psi^\ast}$. One which corresponds to $\ket{\Psi}$ and one which corresponds to its complex conjugate. The non--local properties should then be investigated within one of the subsets since there will not be a physical measure which will distinguish between $\ket{\Psi}$ and $\ket{\Psi^\ast}$.

iii) Generalization of LU--equivalence criterion: It will be demonstrated how the solution of the LU--equivalence for pure $n$--qubit states can be generalized to mixed states and also to $d$--level systems.


\section{LU-equivalence of multipartite states}
\label{SecLU}

Here, we briefly summarize the necessary and sufficient conditions for the
existence of LU operations which transform two $n$--qubit states
into each other \cite{Kr09}. We will first review a standard form for multipartite states (see also \cite{CaHi00} and \cite{KrKr08}) and will provide some examples of states in their standard form. It has been shown that two generic multipartite states, i.e. states where none of the single qubit reduced states is proportional to the identity, are, like in the bipartite case, LU--equivalent iff their standard forms coincide \cite{Kr09}. For non--generic
states the systematic method to determine the local unitaries (if they exist) which interconvert two arbitrary states will be reviewed.

\subsection{Standard form of multipartite states}

Let us first recall the definition of the standard form for multipartite states. Any decomposition of a multipartite state which has the property that the single qubit reduced
states are all diagonal in the computational basis is called trace decomposition. It is obtained by applying local unitary transformations, $U_i^1$, which diagonalize the single qubit reduced states, $\rho_i$, i.e. $U^1_i\rho_i(U_i^1)^\dagger=D_i\equiv\mbox{diag}(\lambda_i^1,\lambda_i^2)$. A sorted trace decomposition, which we
will denote by $\ket{\Psi_{st}}$ in the following, is then defined as a trace decomposition with $\lambda_i^1\geq \lambda_i^2$. The sorted trace decomposition of a generic
state, $\ket{\Psi}$ with $\rho_i\neq \one$ $\forall i$ is unique up
to local phase gates. That is $U_1\ldots U_n \ket{\Psi_{st}}$ is a
sorted trace decomposition of a generic state, $\ket{\Psi}$, iff (up to a global
phase, $\alpha_0$) $U_i= Z_i(\alpha_i)\equiv \text{
diag}(1,e^{i\alpha_i })$. It is straightforward to impose certain conditions on the phases $\alpha_i$, $i\in\{0,\ldots, n\}$ in order to make the sorted trace
decomposition of generic states unique \cite{Kr09}. We call this unique sorted trace decomposition standard form of the multipartite state. Note that any state can be transformed into its standard form by local unitary operations.

Let us now also recall how the standard form can be defined for
states with $\rho_i=\one$, for some system $i$
\cite{KrKr08}. In this case the standard form can be chosen to be
$\lim_{\epsilon\rightarrow 0}\ket{\Psi(\epsilon)}$, where
$\ket{\Psi(\epsilon)}$ denotes the unique standard form of
$\sqrt{1-\epsilon}\ket{\Psi}+\sqrt{\epsilon} \ket{{\bf 0}}$, where the phase
gates are fixed by the same conditions as for generic states \cite{KrKr08}.
It should be noted here that for non--generic states the standard form
is not unique, as can be seen by the following simple example of a
three--qubit states. Both, the GHZ--state,
$\ket{\Psi}=\ket{000}+\ket{111}$ and $H H H \ket{\Psi}$ are
standard forms of the state $\ket{\Psi}$, however they do not
coincide.

Let us now derive the standard form for some examples. For
$2$--qubit states the standard form coincides with the Schmidt
decomposition \cite{NiCh00}. In \cite{Cohen} the standard form of
three--qubit states has been derived. In order to present the
standard forms of certain $n$--partite states we recall here the
notion of the so--called Locally Maximally Entangleable States
(LMESs) \cite{KrKr08}. LMESs have been introduced as a new,
physically motivated, classification of pure quantum states
describing $n$ qubits. A state is called LMES if local auxiliary
qubits can be attached to the system qubits in such a way that the
resulting state is maximally entangled in the bipartite splitting
system qubits versus auxiliary qubits. To be more precise, a state
$\ket{\Psi}$ is a LMES if there exist local control operations
$C_i=\proj{0}\otimes U_i^0+\proj{1}\otimes U_i^1$, with $U_i^{0,1}$
single qubit unitary operators acting on the system qubit $i$, such
that the $2n$--qubit state $C_1 \otimes C_2 \otimes \ldots \otimes
C_n \ket{\Psi}\ket{+}^{\otimes n}$, with
$\ket{+}=1/\sqrt{2}(\ket{0}+\ket{1})$, is a maximally entangled
state between the system and the auxiliary systems. This set of
states coincides with the set of states which can be used to encode
locally the maximum amount of $n$ independent bits. Prominent
examples of these states are the stabilizer states, which are used
for quantum error correction and one--way quantum computing. In
\cite{KrKr08} it has been shown that a state is LME iff it is
LU--equivalent to a state of the form \bea \label{LME}
\ket{\Psi}=\sqrt{\frac{1}{2^{n}}}\sum_{{\bf i}} e^{i \alpha_{{\bf
i}}}\ket{{\bf i}}\equiv U^\Psi_{ph}\ket{+}^{\otimes n},\eea where
$\alpha_{\bf{i}}\in \R$ and $U^\Psi_{ph}$ denotes the diagonal
unitary operator with entries $e^{i\alpha_{{\bf i}}}$. Thus, a state is LME iff there exists a product basis such that all
the coefficients of the state in this basis are phases.

Note that all those states can be easily transformed into their
trace decomposition by applying the local unitary operations
$HU_i$, where $U_i=\textrm{diag}(e^{i \phi_i},1)$, with $\cot(\phi_i)=
\frac{\langle X_i\rangle}{\langle Y_i\rangle}$ if $\langle Y_i\rangle \neq 0$ and $\phi_i=0$ else. To derive from the
trace decomposition the standard form one simply has to impose the
conditions on the local phase gates, as mentioned above.

\subsection{Criterion for LU--equivalence}

Since the standard form is unique for generic states we have, similarly to the bipartite case that two generic states
are LU--equivalent iff their standard forms are equivalent.

Let us now turn to the more complicated case of non--generic states. First, the condition of LU--equivalence for generic states is rewritten in the following way. It can be easily seen that the standard forms of two generic states, $\ket{\Psi}$ and $\ket{\Phi}$ are equivalent iff there exists a bit string ${\bf
k}=(k_1,\ldots, k_n)$, local phase gates $Z_i(\alpha_i)$, and a
global phase $\alpha_0$ s.t. \bea \label{LU} e^{i\alpha_0}
\bigotimes_i
 Z_i(\alpha_i)X_i^{k_i}W_i\ket{\Psi}=\bigotimes_i V_i
\ket{\Phi},\eea where $W_i$ ($V_i$) are local unitaries
which diagonalize $\rho_i$ ($\sigma_i$). That
is $\bigotimes_i W_i\ket{\Psi}$ and $ \bigotimes_i V_i
\ket{\Phi}$ are trace decompositions of $\ket{\Psi}$ and
$\ket{\Phi}$ resp.. For generic states, $k_i$ is chosen such that
the order of the eigenvalues of the single qubit reduced states of
$\bigotimes_i
 X_i^{k_i}W_i\ket{\Psi}$ and $\bigotimes_i V_i
\ket{\Phi}$ coincides and the phases $\alpha_i$ are chosen to fulfill the conditions mentioned above \cite{Kr09}. Note that the reason for the freedom of the phase gates in Eq. (\ref{LU}) is simply due to the fact that we have been considering only single qubit reduced states to define the trace decomposition of multipartite states.

Obviously, two arbitrary states $\ket{\Psi}$ and $\ket{\Phi}$ are LU--equivalent iff there exist local unitaries $V_i$ and $W_i$ and a bit string ${\bf
k}=(k_1,\ldots, k_n)$, and phases $\alpha_i$ s.t. Eq (\ref{LU}) is fulfilled. For non--generic states, a constructive method to determine the unitaries $V_i$ and $W_i$ in Eq. (\ref{LU}) has been presented in \cite{Kr09}. Once those unitaries are fixed it is then easy to decide whether or not there exist local phase gates for a certain bit string ${\bf k}$ such that Eq. (\ref{LU}) is fulfilled (see Lemma \ref{LemmaPhase} below).

Since we are going to determine the local unitaries which transform two states into each other in Sec \ref{secExamples} we review here the constructive method to compute the unitaries $V_k, W_k$.
First of all, it is easy to see that if $\ket{\Psi}$ is such that there exists some system $i$ such that  $\rho_i\neq \one$ the unitaries $V_i$ and $W_i$ can be determined by considering the necessary condition for LU--equivalence, $\rho_i=U_i\sigma_i U_i^\dagger$. Analogously to the generic case,
the equation
$D_i=\mbox{diag}(\lambda_1^i,\lambda_2^i)=W_i \rho_i
 W_i^\dagger=V_i \sigma_i V_i^\dagger$ determines $W_i$ and $V_i$ (and $k_i=0$)
uniquely up to a phase gate. Thus, for this case we have that
$\ket{\Psi}\simeq_{LU} \ket{\Phi}$ iff there exist two phases,
$\alpha_{i}$ and $\alpha_0$ and local unitaries $U_j$ such that
\bea \label{cond1} \phantom{,}_i\bra{l}W_i\Psi_s\rangle
=e^{i(\alpha_0+\alpha_i l)}\bigotimes_{j\neq
i}U_{j}\phantom{,}_i\bra{l}V_i\Phi_s\rangle \mbox{ for }
l\in\{0,1\}, \eea where $W_i,$ and $V_i$ are chosen
such that $D_i=\mbox{diag}(\lambda_1^i,\lambda_2^i)=W_i
\rho_i W_i^\dagger=V_i \sigma_i V_i^\dagger$.
Hence, if there is one system where the reduced state is not
completely mixed, then the problem of
LU--equivalence of $n$--qubit states can be reduced to the one of
$(n-1)$--qubit states. This statement can be easily generalized to
the case where more than one single qubit reduced state is not
completely mixed.


Let us now turn to the remaining case where $\rho_i=\one$
$\forall i$. Instead of considering the necessary conditions
$\rho_i=U_i\sigma_i U_i^\dagger$, one considers the necessary
conditions $\rho_{n_1,\dots,n_l,k}=U_{n_1}\ldots U_{n_l}
U_{k}\sigma_{n_1,\dots,n_l,k} U_{n_1}^\dagger \ldots
U_{n_l}^\dagger U_k^\dagger$, for some appropriately chosen set
$\{n_1,\ldots n_l,k\}$ and computes $U_k$ as a function of
$U_{n_1},\ldots U_{n_l}$. More precisely, it has been shown that if
$\ket{\Psi}=U_1\ldots U_n\ket{\Phi}$ and if there exist systems
$n_1, \ldots n_l$ and $k$ such that $\rho_{n_1, \ldots n_l,k}\neq
\rho_{n_1, \ldots n_l}\otimes \one_k$ then $V_k$ in Eq. (\ref{LU})
can be determined from the state $\ket{\Phi}$ and $W_k$ can be
determined as a function of the unitaries $U_{n_1},\ldots U_{n_l}$.
To be more specific, we assume without loss of generality that
$n_1=1, \ldots n_l=l$. Due to the condition
$\rho_{1,\ldots,l,k}\neq \rho_{1, \ldots l}\otimes \one_k$ it can
be shown that there exist at least two tuples $\i=(i_1,\ldots , i_l)$ and
$\j=(j_1,\ldots j_l)$ such that at least one of the hermitian
$2\times 2$ matrices
$B_{\i}^{\j}=A_{\i}^{\j}+(A_{\i}^{\j})^\dagger$ and $C_{\i}^\j=i
A_\i^\j-i(A_\i^\j)^\dagger$, where $A_{\i}^{\j}\equiv \tr_{\neg
k}[\ket{\i}\bra{\j}\ket{\Phi}\bra{\Phi}]$, is not proportional to
the identity. W. l. o. g. we assume that $\one \not\propto B_\i^\j
= \tr_{\neg k}[(\ket{\i}\bra{\j}+h.c)\ket{\Phi}\bra{\Phi}]$. Using
that $\ket{\Psi}=U_1\ldots U_n\ket{\Phi}$ we have \bea \label{Yi}
U_k B_\i^\j U_k^\dagger =\tr_{\neg k}[(\ket{{\bf i}}\bra{{\bf
j}}+h.c)\cdot U_1^\dagger \ldots U_l^\dagger
\ket{\Psi}\bra{\Psi}U_1 \ldots U_l].\eea Since $B_\i^\j$ is
hermitian we can diagonalize it as well as the right hand side of
Eq (\ref{Yi}). It can then be shown that $\ket{\Psi}=U_1\ldots
U_n\ket{\Phi}$ iff there exists $i_k\in\{0,1\}$, and $\alpha_0$ and
$\alpha_k$ such that
$e^{i\alpha_0}X^{i_k}U(\alpha_k)W_k(U_1,\ldots,U_l)
\ket{\Psi}=U_1\ldots V_k \ldots U_n\ket{\Phi}$, where $V_k$ is the
unitary which diagonalizes $B_\i^\j$ and can therefore be
determined directly from the state $\ket{\Phi}$ and
$W_k(U_1,\ldots,U_l)$ diagonalizes the right hand side of Eq.
(\ref{Yi}).

Note that this constructive method to compute $V_k,W_k$
is based on the necessary condition for
LU--equivalence given in Eq. (\ref{Yi}) for any $l$--tuples
$\bf{i},\bf{j}$. Since the $2\times 2$ matrices occurring in this
equation are hermitian, one can, similarly to the previous cases,
determine the unitaries $V_k, W_k$ by diagonalizing
these matrices. In contrast to before we will find here, that
$W_k$ might depend on $U_1, \ldots ,U_l$. Again, since those unitaries are obtained by diagonalizing a $2\times 2 $ matrix the phase gate occurring in Eq. (\ref{LU}) cannot be determined like that. This is the reason why the condition of LU--equivalence has been rewritten in the seemingly more complicated form presented in Eq. (\ref{LU}).


In order to check then whether or not there
exist phases $\alpha_i$ such that Eq. (\ref{LU}) is satisfied, the following lemma \cite{Kr09}, which will be proven here, has been used. We consider four $n$-- qubit systems which will be
denoted by $A,B,C,D$ respectively. The $i$-th qubit of system $A$
will be denoted by $A_i$, etc. Furthermore, we will use the
notation $\ket{\chi}_i=(\ket{0110}-\ket{1001})_{A_i,B_i,C_i,D_i}$
and $P^i_{AC}=\sum_{\bf k} \ket{{\bf k}}\bra{{\bf k}{\bf
k}}_{A_1,C_1,\ldots A_{i-1},C_{i-1},A_{i+1},C_{i+1}\ldots,
A_n,C_n}$. Similarly, we define $P^i_{BD}$ for systems $B,D$. For
a state $\ket{\Psi}$ we define $K_\Psi\equiv \{{\bf k} \mbox{ such
that }\bra{{\bf k}}\Psi\rangle=0\}$ and
$\ket{\Psi_{\{\bar{\alpha}_i\}}}=\ket{\Psi}+ 2e^{-i\bar{\alpha}_0}
\sum_{{\bf k}\in K_\Psi} e^{-i\sum_{i=1}^n \bar{\alpha}_i
k_i}\ket{{\bf k}}$ for some phases $\bar{\alpha}_i$ and
$\ket{\Psi_{\bf 0}}=\ket{\Psi}+2\sum_{{\bf k}\in K_\Psi} \ket{{\bf
k}}$.

\begin{lemma}
\label{LemmaPhase} Let $\ket{\Psi}, \ket{\Phi}$ be $n$--qubit
states. Then, there exist local phase gates, $Z_i(\alpha_i)$ and a
phase $\alpha_0$ such that
$\ket{\Psi}=e^{i\alpha_0}\bigotimes_{i=1}^n Z_i(\alpha_i)
\ket{\Phi}$ iff there exist phases $\{\bar{\alpha}_i\}_{i=0}^{n}$
such that (i) $|\bra {\bf i} \Psi_{\bf 0}\rangle|= |\bra {\bf i}
\Phi_{\{\bar{\alpha}_i\}}\rangle|$ $\forall {\bf i}$ and (ii)
$\bra{\chi}_i P^i_{AC} P^i_{BD} \ket{\Psi_{\bf 0}}_A\ket{\Psi_{\bf
0}}_B\ket{\Phi_{\{\bar{\alpha}_i\}}}_C\ket{\Phi_{\{\bar{\alpha}_i\}}}_D=0$
$\forall i\in \{1,\ldots ,n\}$.

\end{lemma}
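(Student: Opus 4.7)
I will prove both implications, with the technical heart being the unfolding of condition (ii) into a pointwise algebraic identity on the amplitudes of the padded states.

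\emph{Setup.} Write $A({\bf i})=\bra{{\bf i}}\Psi_{\bf 0}\rangle$ and $B({\bf i})=\bra{{\bf i}}\Phi_{\{\bar\alpha_i\}}\rangle$. Expanding the four factor states in the computational basis, applying $\bra{\chi}_i=\bra{0110}-\bra{1001}$ on $A_iB_iC_iD_i$, and contracting $A_{\neg i}\leftrightarrow C_{\neg i}$ and $B_{\neg i}\leftrightarrow D_{\neg i}$ via $P^i_{AC}$, $P^i_{BD}$, the left hand side of (ii) becomes
\begin{equation*}
\sum_{{\bf s},{\bf t}}\bigl[A({\bf s},0)A({\bf t},1)B({\bf s},1)B({\bf t},0)-A({\bf s},1)A({\bf t},0)B({\bf s},0)B({\bf t},1)\bigr]\ket{{\bf s}}\ket{{\bf t}},
\end{equation*}
where $({\bf s},b)$ denotes insertion of bit $b$ in position $i$. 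Thus (ii) is equivalent to the coordinate-wise identity $A({\bf s},0)A({\bf t},1)B({\bf s},1)B({\bf t},0)=A({\bf s},1)A({\bf t},0)B({\bf s},0)B({\bf t},1)$ for all ${\bf s},{\bf t}\in\{0,1\}^{n-1}$ and every $i$.

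\emph{($\Rightarrow$):} If $\ket{\Psi}=e^{i\alpha_0}\bigotimes_jZ_j(\alpha_j)\ket{\Phi}$, diagonal unitaries preserve supports, so $K_\Psi=K_\Phi$. Choosing $\bar\alpha_j=\alpha_j$ for $j=0,\ldots,n$, a direct substitution into the definitions gives $\ket{\Psi_{\bf 0}}=e^{i\alpha_0}\bigotimes_jZ_j(\alpha_j)\ket{\Phi_{\{\bar\alpha_i\}}}$, hence (i). For (ii), replacing $A({\bf i})$ by $e^{i\alpha_0}e^{i\sum_j\alpha_ji_j}B({\bf i})$ in both sides of the pointwise identity produces the same phase factor $e^{2i\alpha_0}e^{i(\sigma+\tau+\alpha_i)}$ (with $\sigma,\tau$ collecting the contributions from ${\bf s},{\bf t}$), so the two sides agree.

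\emph{($\Leftarrow$):} Condition (i) forces $K_\Psi=K_\Phi$: for ${\bf k}\in K_\Psi$ one has $|A({\bf k})|=2$, while $|\bra{{\bf k}}\Phi\rangle|\leq 1<2$ by normalization, so only the padding contribution of $B({\bf k})$ (modulus $2$) can match, i.e.\ ${\bf k}\in K_\Phi$; the reverse inclusion is symmetric. Consequently $|A({\bf i})|=|B({\bf i})|>0$ for every ${\bf i}$, and we may write $A({\bf i})=r_{\bf i}e^{i\theta_{\bf i}}$, $B({\bf i})=r_{\bf i}e^{i\phi_{\bf i}}$. Set $f({\bf i})=\theta_{\bf i}-\phi_{\bf i}\in\mathbb{R}/2\pi\mathbb{Z}$. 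Dividing the pointwise identity by the common positive moduli, it becomes $f({\bf s},0)-f({\bf s},1)\equiv f({\bf t},0)-f({\bf t},1)\pmod{2\pi}$, i.e.\ the discrete $i$-th partial $f({\bf s},0)-f({\bf s},1)$ is constant in ${\bf s}$ for every $i$. A simple telescoping then yields $f({\bf i})\equiv\alpha_0+\sum_j\alpha_ji_j\pmod{2\pi}$, hence $\ket{\Psi_{\bf 0}}=e^{i\alpha_0}\bigotimes_jZ_j(\alpha_j)\ket{\Phi_{\{\bar\alpha_i\}}}$. Restricting to ${\bf i}\notin K_\Psi=K_\Phi$, the paddings drop out and we obtain $\ket{\Psi}=e^{i\alpha_0}\bigotimes_jZ_j(\alpha_j)\ket{\Phi}$.

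\emph{Main obstacle.} The delicate part is the unpacking of $\bra{\chi}_i P^i_{AC}P^i_{BD}$: the antisymmetric form of $\ket{\chi}_i=\ket{0110}-\ket{1001}$ is tailored so that the two summands produce a Gram-like determinantal expression rather than one that collapses trivially to zero, turning a single global equation into a family of pointwise identities indexed by $({\bf s},{\bf t})$. The moduli argument $|\bra{{\bf k}}\Phi\rangle|\leq 1<2$ that forces $K_\Psi=K_\Phi$, and the telescoping that assembles per-coordinate affineness of $f$ into global affineness, are clean bookkeeping once the pointwise form of (ii) is in hand.
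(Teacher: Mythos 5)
Your proof is correct and follows essentially the same route as the paper: you unfold condition (ii) into the same pointwise identity $\bra{0{\bf k}}\psi\rangle\bra{1{\bf l}}\psi\rangle\bra{1{\bf k}}\phi\rangle\bra{0{\bf l}}\phi\rangle=\bra{1{\bf k}}\psi\rangle\bra{0{\bf l}}\psi\rangle\bra{0{\bf k}}\phi\rangle\bra{1{\bf l}}\phi\rangle$, verify it by phase substitution in one direction, and in the other direction show the phase differences are affine in the bit string by the same telescoping argument. The only cosmetic difference is that you inline the paper's auxiliary reduction (Lemma \ref{auxLemma}), deriving $K_\Psi=K_\Phi$ directly from condition (i) via the modulus-$2$ padding, which is equally valid.
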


Condition (ii) can be interpreted as follows. Taking two copies of the state $\ket{\Psi_{\bf
0}}$ and two copies of the state $\ket{\Phi_{\{\bar{\alpha}_i\}}}$
and projecting the four qubits, $A_i,B_i,C_i,D_i$ onto the state $\ket{\chi}_i$ leads to a $4(n-1)$--qubit state, which is in the kernel of $P^i_{AC} P^i_{BD}$ for any system $i$. Before proving Lemma \ref{LemmaPhase} we introduce here another lemma, which will be required for the proof. Using the same notation as before, we have

\begin{lemma}
\label{auxLemma}
$\ket{\Psi}$ can be converted into $\ket{\Phi}$ by local unitary
phase gates iff there exist phases $\{\bar{\alpha}_i\}_{i=0}^{n}$
such that $\ket{\Psi_{\bf 0}}$ is converted into
$\ket{\Phi_{\bar{\alpha}_i}}$ by local unitary phase
gates.\end{lemma}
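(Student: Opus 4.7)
The plan is to prove both directions by direct comparison of the computational-basis coefficients, using the fact that the ``padding'' term placed on $K_\Psi$ (resp.~$K_\Phi$) has magnitude $2$, which is strictly larger than any coefficient of a normalized state.

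For the forward direction I would argue as follows. Suppose $\ket{\Psi}=e^{i\alpha_0}\bigotimes_i Z_i(\alpha_i)\ket{\Phi}$. Since $Z_i(\alpha_i)$ is diagonal in the computational basis, $\bra{\bf k}\Psi\rangle=e^{i\alpha_0}e^{i\sum_i\alpha_i k_i}\bra{\bf k}\Phi\rangle$ for every ${\bf k}$, which immediately implies $K_\Psi=K_\Phi$. Choosing $\bar{\alpha}_i=\alpha_i$ for $i=0,\ldots,n$, the prefactor on the padding term of $\ket{\Phi_{\{\bar{\alpha}_i\}}}$ becomes $2e^{-i\alpha_0}e^{-i\sum\alpha_i k_i}$, so $e^{i\alpha_0}e^{i\sum \alpha_i k_i}\bra{\bf k}\Phi_{\{\bar{\alpha}_i\}}\rangle=2=\bra{\bf k}\Psi_{\bf 0}\rangle$ on $K_\Psi$, while on the complement the padding is absent on both sides and the original phase-gate relation applies. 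Hence $\ket{\Psi_{\bf 0}}=e^{i\alpha_0}\bigotimes_iZ_i(\alpha_i)\ket{\Phi_{\{\bar{\alpha}_i\}}}$.

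For the converse, assume phases $\bar{\alpha}_i$, $\beta_0,\beta_1,\ldots,\beta_n$ exist with $\ket{\Psi_{\bf 0}}=e^{i\beta_0}\bigotimes_iZ_i(\beta_i)\ket{\Phi_{\{\bar{\alpha}_i\}}}$. The first step is to establish $K_\Psi=K_\Phi$. If ${\bf k}\in K_\Psi$, then $\bra{\bf k}\Psi_{\bf 0}\rangle=2$, so by the assumed equality $|\bra{\bf k}\Phi_{\{\bar{\alpha}_i\}}\rangle|=2$. Because $\ket{\Phi}$ is normalized, every $|\bra{\bf k}\Phi\rangle|\le 1<2$, so this forces ${\bf k}\in K_\Phi$ (only then does the padding term of magnitude $2$ appear). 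The reverse inclusion follows symmetrically. The second step is then immediate: for ${\bf k}\notin K_\Psi=K_\Phi$ the padding terms are absent on both sides, so $\bra{\bf k}\Psi\rangle=e^{i\beta_0}e^{i\sum\beta_i k_i}\bra{\bf k}\Phi\rangle$; for ${\bf k}\in K_\Psi=K_\Phi$ both original coefficients vanish and the same relation holds trivially. Hence $\ket{\Psi}=e^{i\beta_0}\bigotimes_i Z_i(\beta_i)\ket{\Phi}$.

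The only mildly subtle point is the magnitude argument used to pin down $K_\Psi=K_\Phi$; everything else is bookkeeping on computational-basis amplitudes. No further lemmas beyond normalization of $\ket{\Psi}$ and $\ket{\Phi}$ are needed, and the phases $\bar\alpha_i$ can simply be chosen equal to the $\alpha_i$ that realize the phase-gate equivalence of $\ket{\Psi}$ and $\ket{\Phi}$.
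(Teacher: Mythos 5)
Your proof is correct and follows essentially the same route as the paper: choose $\bar{\alpha}_i=\alpha_i$ for the forward direction, and for the converse use the factor $2$ (which exceeds any coefficient of a normalized state) to force $K_\Psi=K_\Phi$ before stripping the padding. The only difference is cosmetic — the paper removes the padding with the projector $P=\sum_{{\bf k}\notin K_\Psi}\proj{{\bf k}}$ and leaves the magnitude argument implicit, while you spell it out coefficient by coefficient.
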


\begin{proof}

If $\ket{\Psi}=e^{i\alpha_0}\bigotimes_{i=1}^n Z_i(\alpha_i)
\ket{\Phi}$, for some phases $\{\alpha_i\}$, then $K_\Psi=K_\Phi$ and choosing $\bar{\alpha}_i=\alpha_i$ for $i\in
\{0,\ldots, n\}$ fulfills the condition. To prove the inverse
direction we assume that there exist phases
$\{\bar{\alpha}_i\}_{i=0}^{n}$ such that $\ket{\Psi_{\bf
0}}=e^{i\alpha_0}\bigotimes_{i=1}^n Z_i(\alpha_i)
\ket{\Phi_{\bar{\alpha}_i}}$ for some phases $\{\alpha_i\}$. Due to the factor $2$ in the definition of $\ket{\Psi_{\bf 0}}$ and
$\ket{\Phi_{\bar{\alpha}_i}}$, this implies $K_\Psi=K_\Phi$.
Defining the projector $P=\sum_{{\bf k}\not\in K_\Psi} \proj{{\bf k}}$
we have $P\ket{\Psi_{\bf 0}}=\ket{\Psi}$ and
$Pe^{i\alpha_0}\bigotimes_{i=1}^n Z_i(\alpha_i)
\ket{\Phi_{\bar{\alpha}_i}}=e^{i\alpha_0}\bigotimes_{i=1}^n
Z_i(\alpha_i) P\ket{\Phi_{\bar{\alpha}_i}}$ and therefore
$\ket{\Psi}=e^{i\alpha_0}\bigotimes_{i=1}^n Z_i
(\alpha_i)\ket{\Phi}$.

\end{proof}
The reason for introducing this lemma is that it implies that if one wants to decide whether or not two states are up to local phase gates equivalent, one only needs to consider states where none of the coefficients in the computational basis vanish. Let us now use the lemma above to prove Lemma \ref{LemmaPhase}.
\begin{proof}

As mentioned above, due to the Lemma \ref{auxLemma} it remains to show that for any state
$\ket{\psi}$ with $\bra{{\bf k}}\psi\rangle\neq 0$ $\forall {\bf
k}$ we have that $\ket{\psi}=e^{i\alpha_0}\bigotimes_{i=1}^n Z_i
(\alpha_i) \ket{\phi}$, for some phases $\{\alpha_i\}$ iff condition (i) and (ii) in Lemma
\ref{LemmaPhase} are satisfied. Note that condition (ii) is
equivalent to $\bra{0{\bf k}}\psi\rangle \bra{1{\bf l}}\psi\rangle
\bra{1{\bf k}}\phi\rangle \bra{0{\bf l}}\phi\rangle= \bra{1{\bf k}}\psi\rangle
\bra{0{\bf l}}\psi\rangle \bra{0{\bf k}}\phi\rangle \bra{1{\bf l}}\phi\rangle, $
where $0,1$ is acting on system $i$ and ${\bf k},{\bf l}$ denote the
computational basis states of the remaining $n-1$ qubits.

Let us now prove the {\it only if} part: If
$\ket{\psi}=e^{i\alpha_0}\bigotimes_{i=1}^n Z(\alpha_i) \ket{\phi}$ for some phases $\{\alpha_i\}$ then
$\bra{{\bf i}}\psi\rangle=e^{i\phi_{{\bf i}}}\bra{{\bf
i}}\phi\rangle$, with $\phi_{{\bf i}}=\alpha_0+\sum_k \alpha_k
i_k$, which implies (i). Condition (ii) (for $i=1$) is then
equivalent to $e^{i(\phi_{0\k}+\phi_{1\l})}
x_{\k\l}=e^{i(\phi_{1\k}+\phi_{0\l})} x_{\k\l},$  where
$x_{\k\l}=\bra{0\k}\phi\rangle \bra{1\l}\phi\rangle \bra{1\k}\phi\rangle
\bra{0\l}\phi\rangle$. It is easy to see that this condition is
fulfilled since $e^{i(\phi_{0\k}-\phi_{1\k})}=e^{-i\alpha_1}$ $
\forall \k$. In the same way one can show that the conditions for
$i\neq 1$ are fulfilled.

To prove the {\it if} part, we first note that condition (i) implies that $\bra{{\bf i}}\Psi\rangle=e^{i
\phi_{\bf i}}\bra{{\bf i}}\Phi\rangle$, for some phases $\phi_{\bf
i}$. Condition (ii) (for $i=1$) implies then that
$e^{i(\phi_{0\k}-\phi_{1\k})} =e^{i(\phi_{0\l}-\phi_{1\l})}$ $\forall
\k,\l$,  since $x_{\k\l}=\bra{0\k}\phi\rangle \bra{1\l}\phi\rangle
\bra{1\k}\phi\rangle \bra{0\l}\phi\rangle \neq 0$ $\forall \k,\l$.
Thus, $e^{i(\phi_{0\k}-\phi_{1\k})}$ must be independent of $\k$ and
therefore, we have $e^{i(\phi_{0\k}-\phi_{1\k})}=e^{-i\alpha_1}$, for some phase $\alpha_1$. Equivalently, we have  $e^{i\phi_{k_1,\k}}=e^{i
(\alpha_1^{(k_1)}+\phi_{1\k})}$, where $\alpha_1^{(0)}=-\alpha_1$
and $\alpha_1^{(1)}=0$. Similarly, we obtain $e^{i(\phi_{k_10k_3,\ldots, k_n}-\phi_{k_11k_3\ldots,
k_n})}=e^{-i\alpha_2}$ and therefore
$e^{i\phi_{k_1,k_2,k_3\ldots,k_n}}=e^{i
(\alpha_1^{(k_1)}+\alpha_2^{(k_2)}+\phi_{11k_3,\ldots,k_n})}$.
Continuing in this way we find $e^{i\phi_{k_1,\ldots
k_n}}=e^{i\alpha_0}e^{i\sum_j \alpha_j k_j}$, where
$\alpha_0=\phi_{1\ldots 1}-\sum \alpha_i$. Thus, we have
$\ket{\psi}=e^{i\alpha_0}\bigotimes_{i=1}^n Z_i (\alpha_i)
\ket{\phi}$.
\end{proof}

It is important to note here that
the state on the right hand side of Eq. (\ref{LU}) is completely
determined using the method summarized above. Thus, the set $K_\Psi$ in Lemma \ref{LemmaPhase} can be
determined and therefore this lemma can be applied. The states are
LU--equivalent iff the conditions in Lemma \ref{LemmaPhase} are
fulfilled for some bit string ${\bf k}$. The unitaries which interconvert the states
are, up to the symmetry of the states, uniquely determined and are given
by $U_i=W^\dagger_i Z_i(\alpha_i) X^{k_i}V_i$ (up to a
global phase) \footnote{Note that the phases $\alpha_i$ can be
easily computed.}.

In summary, the LU--equivalence problem has been solved by presenting a systematic method to determine the local unitaries (if they exist) which interconvert the states. This has been achieved by determining $V_i,W_i$ in Eq. (\ref{LU}) by imposing necessary conditions of LU--equivalence, like $\rho_i=U_i\sigma_iU_i^\dagger$ and Eq. (\ref{Yi}). Once all the unitaries $V_i,W_i$ are determined (even as functions of some others), the states are LU--equivalent iff there exist local phase gates which interconvert the transformed states (after applying $\bigotimes_i V_i$, $\bigotimes_i W_i$ to $\ket{\Phi}$, $\ket{\Psi}$ respectively). This can then be easily decided by employing Lemma \ref{LemmaPhase}.

Before ending this section let us present here another way of checking whether or not two states are interconvertible by local phase gates. Due to Lemma \ref{auxLemma} we only need to consider states $\ket{\Psi}$, $\ket{\Phi}$ with $K_\Psi=K_\Phi={\O}$. Here and in the following we will denote by $\bigodot$ the Hadamard product, i.e. the component--wise product and by $/.$ we will denote the inverse operation, i.e. the component-wise division. For instance, if $\ket{\Psi}=\sum_{\i} a_\i \ket{\i}$ and $\ket{\Phi}=\sum_{\i} b_\i \ket{\i}$, with $b_\i\neq 0$ $\forall \i$, then $\ket{\Psi}/.\ket{\Phi}=\sum_{\i} a_\i/b_\i \ket{\i}$.

\begin{lemma} \label{LemmaPhase1} Let $\ket{\Psi}$ and $\ket{\Phi}$ be $n$--qubit states with $K_\Psi=K_\Phi={\O}$. Then, there exist phases $\{\alpha_i\}$ such that $\ket{\Psi}=e^{i\alpha_0}\bigotimes_i Z(\alpha_i) \ket{\Phi}$ iff i) $|\bra{\bf{i}}\Psi\rangle|=|\bra{\bf{i}}\Phi\rangle|$ and ii) $\ket{\Psi}/.\ket{\Phi}$ is a product state.  \end{lemma}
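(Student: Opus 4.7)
The plan is to prove both directions by directly examining the coefficients of $\ket{\Psi}$ and $\ket{\Phi}$ in the computational basis, exploiting the fact that conditions (i) and (ii) together pin down precisely the action of diagonal phase gates at the coefficient level. The core observation is that condition (i) guarantees the Hadamard ratio $\ket{\Psi}/.\ket{\Phi}$ has all entries of modulus one, while condition (ii) forces that ratio to factorize, and together these two facts leave exactly the structure of a tensor product of local phase gates.

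For the \emph{only if} direction I would simply substitute: if $\ket{\Psi}=e^{i\alpha_0}\bigotimes_i Z_i(\alpha_i)\ket{\Phi}$ then $\bra{\mathbf{i}}\Psi\rangle = e^{i(\alpha_0+\sum_k \alpha_k i_k)}\bra{\mathbf{i}}\Phi\rangle$, which immediately yields (i). Dividing componentwise, $\ket{\Psi}/.\ket{\Phi}=e^{i\alpha_0}\sum_\mathbf{i} e^{i\sum_k \alpha_k i_k}\ket{\mathbf{i}}$ factorizes as $e^{i\alpha_0}\bigotimes_i(\ket{0}+e^{i\alpha_i}\ket{1})$, which is a product state, giving (ii).

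For the \emph{if} direction, condition (i) lets me write $\bra{\mathbf{i}}\Psi\rangle=e^{i\phi_\mathbf{i}}\bra{\mathbf{i}}\Phi\rangle$ for some phases $\phi_\mathbf{i}$, so $\ket{\Psi}/.\ket{\Phi}=\sum_\mathbf{i} e^{i\phi_\mathbf{i}}\ket{\mathbf{i}}$ is a state with all coefficients of unit modulus. By (ii) this is a product state $\bigotimes_i (c_i^0\ket{0}+c_i^1\ket{1})$. I would then argue that $|c_1^{k_1}\cdots c_n^{k_n}|=1$ for \emph{every} bit string $\mathbf{k}$ forces $|c_i^0|=|c_i^1|$ for each $i$ (compare any two bit strings differing only in position $i$). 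Hence each factor is proportional to $\ket{0}+e^{i\alpha_i}\ket{1}$, and absorbing the remaining modulus--one prefactors into a global phase $e^{i\alpha_0}$ yields $\ket{\Psi}/.\ket{\Phi}=e^{i\alpha_0}\bigotimes_i(\ket{0}+e^{i\alpha_i}\ket{1})$, i.e.\ $\bra{\mathbf{i}}\Psi\rangle=e^{i\alpha_0}e^{i\sum_k \alpha_k i_k}\bra{\mathbf{i}}\Phi\rangle$, which is exactly the statement $\ket{\Psi}=e^{i\alpha_0}\bigotimes_i Z_i(\alpha_i)\ket{\Phi}$.

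The proof is essentially bookkeeping and I do not anticipate a serious obstacle; the only step that requires a sentence of argument (rather than a calculation) is the extraction of $|c_i^0|=|c_i^1|$ from the fact that the product state has all unit--modulus components. The hypothesis $K_\Psi=K_\Phi=\emptyset$ is used throughout to ensure the componentwise division $/.$ is well defined and that no hidden zeros obstruct the modulus argument.
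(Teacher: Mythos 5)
Your proposal is correct and follows essentially the same route as the paper: both directions are handled by observing that condition (i) makes $\ket{\Psi}/.\ket{\Phi}$ a state whose computational-basis coefficients are all phases, and condition (ii) then forces this state to be a product of factors of the form $\ket{0}+e^{i\alpha_i}\ket{1}$ (up to a global phase), which is exactly $e^{i\alpha_0}\bigotimes_i Z(\alpha_i)\ket{+}^{\otimes n}$. Your explicit step showing $|c_i^0|=|c_i^1|$ by comparing bit strings differing in one position just spells out a detail the paper's proof leaves implicit, so the arguments coincide in substance.
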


\begin{proof} (Only if): If $\ket{\Psi}=e^{i\alpha_0}\bigotimes_i Z(\alpha_i) \ket{\Phi}$ condition i) is obviously fulfilled. In order to show that condition ii) is fulfilled we use that  $e^{i\alpha_0}\bigotimes_i Z(\alpha_i) \ket{\Phi}=e^{i\alpha_0}\bigotimes_i Z(\alpha_i)\ket{+}^{\otimes n} \bigodot \ket{\Phi}$. Thus, $\ket{\Psi}/.\ket{\Phi}=e^{i\alpha_0}\bigotimes_i Z(\alpha_i)\ket{+}^{\otimes n}$, which is a product state.

(If): Due to condition i) we have that $\ket{\Psi}/.\ket{\Phi}=\sum e^{i\alpha_{{\bf i}}}\ket{{\bf i}}$, for some phases $\alpha_{\bf i}$. That is, $\ket{\Psi}/.\ket{\Phi}$ is a LME state. Due to condition (ii) this LME state must be a product state. i.e $\sum e^{i\alpha_{{\bf i}}}\ket{{\bf i}}=\bigotimes_i \ket{\phi_i}$, where $\ket{\phi_i}=e^{i\Phi_0^i}(\lambda_0^i\ket{0}+e^{i\Phi_1^i}\lambda_1^i \ket{1})$ with $\lambda_k^i\geq 0$. This implies that $e^{i\alpha_{{\bf i}}}=e^{i \Phi_0+\sum_{k} (\Phi_1^k)^{i_k}}$, where $\Phi_0=\sum_{k} \Phi_0^k$. Thus, the LME state is a product state iff it is equivalent to $e^{i\alpha_0}\bigotimes_i Z(\alpha_i)\ket{+}^{\otimes n}$ for some phases $\{\alpha_i\}$ and therefore $\ket{\Psi}=e^{i\alpha_0}\bigotimes_i Z(\alpha_i) \ket{\Phi}$.
\end{proof}

As mentioned above, Lemma \ref{auxLemma} can be used to generalize Lemma \ref{LemmaPhase1} to states, $\ket{\Psi}$ for which $K_\Psi \neq {\O}$. Note that condition ii) has a physical interpretation. The Hadamard product of two states $\ket{\psi}$, $\ket{\phi}$ corresponds to the state one would get by the following procedure. Let $\ket{\psi}$ ($\ket{\phi}$) describe the system $1_1,\ldots,n_1$ ($1_2,\ldots,n_2$) resp. and consider $n$ pairs of maximally entangled two--qubit states, $\ket{\Phi^+}=\sum_{i=0}^1 \ket{ii}$, describing systems $1_3,1_4, \ldots n_3,n_4$. Then, $\ket{\phi}\bigodot \ket{\psi}=\bigotimes_{i=1}^n \langle \Psi^0_{i_1,i_2,i_3} \ket{\psi}_{1_1,\ldots n_1}\ket{\phi}_{1_2,\ldots n_2} \bigotimes_{i=1}^n \ket{\Phi^+}_{i_3,i_4}$, where $\ket{\Psi^0}$ denotes the GHZ states here. This resembles the procedure of gate teleportation \cite{Ni01}.
Note that condition ii) is fulfilled iff there exists a product state, $\bigotimes_i\ket{\phi_i}$ such that $\ket{\Psi_{\bf 0}}=\ket{\Phi_{\bar{\alpha}_i}}\bigodot \bigotimes_i\ket{\phi_i}$.

\section{Additional Methods to compute the local unitaries}

\label{SecAdd}
We have seen before how the local unitaries which occur in Eq. (\ref{LU}) can be determined by imposing certain necessary conditions of LU--equivalence (see Eq. (\ref{Yi})).
One might also use other necessary conditions for LU--equivalence to determine those local unitaries. For instance, if $\ket{\Psi}=U_1 \ldots U_n\ket{\Phi}$ then $\tr_{1}(\rho_{1}\otimes \one_2 \rho_{12})=U_2\tr_{1}(\sigma_{1}\otimes \one_2 \sigma_{12})U_2^\dagger$ and $\tr_{23}(\rho_{123}\otimes \one_{1^\prime}\rho_{1^\prime 23}\otimes \one_1)=U_1 \otimes U_{1^\prime} \tr_{23}(\sigma_{123}\otimes \one_{1^\prime}\sigma_{1^\prime 23}\otimes \one_1)U_1^\dagger \otimes U_{1^\prime}^\dagger$. Of course, any generalization of these equations must be fulfilled too. Here we will use those and other necessary conditions for LU--equivalence to derive some additional methods to compute the unitaries in Eq. (\ref{LU}) for certain multipartite states. Depending on the properties of the states of interest one method or the other might be better suited. In Sec \ref{secExamples} we will use the various methods to compute the local unitaries directly, i.e. not as a function of other unitaries. This makes the characterization of LU---equivalence classes easier.

Here, we will first consider the LU--equivalence of two--qubit mixed states. Then we will focus on those states for which there exists at least one system $i$ with $\rho_i\neq \one$ and will derive a simple way to determine the unitaries in Eq. (\ref{LU}).

\subsection{Two--Qubit Mixed States}

\label{Sectwoquits}

For two--qubit mixed states, $\rho$, $\sigma$, necessary and sufficient conditions for LU--equivalence have been derived in \cite{Mak02}. However, if $\rho=\rho_{ij}$ ($\sigma=\sigma_{ij}$) denotes the reduced state of some systems $i,j$ of a multipartite state, $\ket{\Psi}$ ($\ket{\Phi}$) resp. and the aim is to investigate the LU--equivalence of $\ket{\Psi}$ and $\ket{\Phi}$, then one must determine all local unitaries, $U_i,U_j$, which fulfill $\rho_{ij}=U_i U_j\sigma_{ij} U_i^\dagger U_j^\dagger$ and then check if there exists one of them, which transforms the multipartite states into each other. We are going to show here how to achieve this task.

We have seen above that if there exists some system $i$ such that $\rho_i\neq \one$, then $V_i$, $W_i$ and $k_i$ in Eq. (\ref{LU}) can be determined by imposing the necessary condition $\rho_i=U_i\sigma_i U_i^\dagger$. Thus, it remains to consider the case where both reduced states are proportional to the identity which implies that $\rho=\one +\sum_{k,l} \Lambda_{k,l} \Sigma_k\otimes \Sigma_l$, where $\Lambda=\sum_{kl}\lambda_{kl}\ket{k}\bra{l}$ is real. Applying the local unitary operation
$U_1\otimes U_2$ to the state, $\rho$ leads to $U_1\otimes U_2 \rho U_1^\dagger \otimes U_2^\dagger= \one +\sum_{k,l} \Lambda^\prime_{k,l} \Sigma_k\otimes \Sigma_l$, with $\Lambda^\prime=O_1\Lambda O^T_2$.
Here, $O_1,O_2$ are real orthogonal matrices which are defined via the equation $U_i (\vec{n}\vec{\sigma})U_i^\dagger =(O_i\vec{n}
\vec{\sigma})$ for $i=1,2$. Using the singular value decomposition of the real matrix $\Lambda$, $\Lambda=O_1 D O_2^T$, where $D$ is a diagonal matrix, and $O_{1,2}$ are real and orthogonal, and the fact that the state $\one+\sum_{k,l} D_{k,k} \Sigma_k\otimes \Sigma_k$,
is Bell--diagonal shows that the eigenbasis of any two--qubit density matrix with completely mixed reduced states is
maximally entangled.

In order to show now under which conditions two two--qubit states are LU--equivalent we recall the following Lemma which was proven in \cite{KrCi01}.

\begin{lemma}\label{LemmaME} Any two maximally entangled basis of two qubits can be mapped into each other using local unitary operations \cite{KrCi01}. That is, if $\{\ket{\Psi_i}\}_{i=1}^4 $ and $\{\ket{\Phi_i}\}_{i=1}^4 $ denote two maximally entangled bases then there exist four phases $\gamma_i$, and local unitaries $U_1,U_2$ such that $\ket{\Psi_i}=e^{i\gamma_i}U_1 \otimes U_2 \ket{\Phi_i}$ $\forall i\in\{1,2,3,4\}$.
\end{lemma}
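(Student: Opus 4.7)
The plan is to exploit the standard bijection between maximally entangled states and $2\times 2$ unitaries: every MES of two qubits can be written as $\ket{\Psi}=(V\otimes \one)\ket{\Phi^+}$ with $\ket{\Phi^+}=(\ket{00}+\ket{11})/\sqrt{2}$, and the orthonormality $\bk{\Psi_i|\Psi_j}=\delta_{ij}$ is equivalent to $\tr(V_i^\dagger V_j)=2\delta_{ij}$, i.e.\ $\{V_i/\sqrt{2}\}$ is an orthonormal basis of $M_2(\C)$ in the Hilbert--Schmidt inner product. Writing $\ket{\Psi_i}=(V_i\otimes\one)\ket{\Phi^+}$ and $\ket{\Phi_i}=(W_i\otimes\one)\ket{\Phi^+}$ and using the identity $(U_1\otimes U_2)\ket{\Phi^+}=(U_1U_2^T\otimes\one)\ket{\Phi^+}$, the claim reduces to the purely algebraic statement: given two orthonormal unitary bases $\{V_i\}$, $\{W_i\}$ of $M_2(\C)$ there exist unitaries $U_1,U_2$ and phases $\gamma_i$ with $V_i=e^{i\gamma_i}U_1 W_i U_2^T$. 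By composition it suffices to do the case where $\{W_i\}$ is the Pauli basis $\{\one,X,Y,Z\}$, i.e.\ to show that every MES basis is LU--equivalent (up to phases) to the Bell basis.

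Next I would exhaust the available degrees of freedom: by left--multiplying the original basis by $V_1^\dagger\otimes\one$ we may assume $V_1=\one$ (so $\ket{\Psi_1}=\ket{\Phi^+}$); the remaining freedom that preserves $\ket{\Phi^+}$ (up to a global phase) is exactly the one--parameter family $U_1=e^{i\gamma_1}\bar U$, $U_2=U$, and under this action $V_i\mapsto e^{i\gamma_1}\bar U V_i U^T = e^{i\gamma_1}W V_i W^\dagger$ with $W=\bar U$. The remaining basis members $V_2,V_3,V_4$ are orthogonal to $\one$, hence traceless. A short computation (using $\{\sigma_k,\sigma_l\}=2\delta_{kl}\one$) shows that any traceless $2\times 2$ unitary has the form $e^{i\theta}\,\vec n\cdot\vec\sigma$ with $\vec n\in\R^3$ a unit vector; the HS--orthogonality $\tr(V_i^\dagger V_j)=2\delta_{ij}$ then translates into $\vec n_i\cdot\vec n_j=\delta_{ij}$, so $\{\vec n_2,\vec n_3,\vec n_4\}$ is an orthonormal basis of $\R^3$.

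To finish, I would invoke the $SU(2)$--$SO(3)$ double cover: any rotation $R\in SO(3)$ is realized by a unitary $W\in SU(2)$ via $W(\vec n\cdot\vec\sigma)W^\dagger=(R\vec n)\cdot\vec\sigma$. Choosing signs $\epsilon_i\in\{\pm 1\}$ with $\epsilon_2\epsilon_3\epsilon_4=\det[\vec n_2|\vec n_3|\vec n_4]=\pm 1$, the orthogonal matrix $R\equiv \mathrm{diag}(\epsilon_2,\epsilon_3,\epsilon_4)[\vec n_2|\vec n_3|\vec n_4]^T$ lies in $SO(3)$ and satisfies $R\vec n_i=\epsilon_i\vec e_{i-1}$; lifting to $W\in SU(2)$ yields $WV_iW^\dagger=\epsilon_i e^{i\theta_i}\sigma_{i-1}$, which is the Pauli representative up to the phase $\gamma_i=\theta_i+\arg(\epsilon_i)$. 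Undoing the reductions gives the desired $U_1,U_2,\gamma_i$.

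The one step that genuinely requires care is the sign choice in the previous paragraph: the orthonormal frame $\{\vec n_i\}$ may be left-- or right--handed (det $\pm 1$), and only $SO(3)$ lifts to $SU(2)$, so a naive $O(3)$ rotation cannot in general be realized by a local unitary. The trick that saves the argument is that the free phases $\gamma_i$ absorb a sign flip ($-1=e^{i\pi}$), so one is always free to flip an odd number of $\epsilon_i$'s and thereby restore $\det R=+1$; this is the only place where the phase freedom in the statement of the lemma is essential.
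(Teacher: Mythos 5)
Your argument is correct, but note that the paper does not prove Lemma \ref{LemmaME} at all: it simply imports the result from \cite{KrCi01}. In that reference the proof runs through the magic basis: a two--qubit state is maximally entangled iff its coordinates in the magic basis are real up to a global phase, and any $U_1\otimes U_2$ acts in that basis as a real orthogonal matrix of determinant $+1$; hence two maximally entangled bases correspond (after stripping phases) to two real orthonormal bases of $\R^4$, related by some $O\in O(4)$, and a possible $\det O=-1$ is repaired by multiplying one basis vector by $-1$, i.e.\ by a phase $\gamma_i=\pi$. Your route --- encoding each maximally entangled state as $(V\otimes\one)\ket{\Phi^+}$, translating orthonormality of the states into Hilbert--Schmidt orthonormality of the unitaries $V_i$, normalizing $V_1=\one$, identifying the residual stabilizer of $\ket{\Phi^+}$ as conjugation $V\mapsto WVW^\dagger$, and then rotating the orthonormal Bloch frame of the traceless $V_2,V_3,V_4$ onto the Pauli frame via the $SU(2)\rightarrow SO(3)$ lift --- is a genuinely different packaging, but it hits the same obstruction (the handedness of the frame, $\det=\pm1$, since only proper rotations lift to $SU(2)$) and resolves it in the same way, by letting the free phases absorb a sign flip; your explicit remark on this point is exactly the crux, and your index bookkeeping in the reduction to the Pauli basis is handled correctly. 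What each approach buys: yours is self-contained and elementary, needing only the state--operator correspondence and the adjoint representation of $SU(2)$; the magic-basis argument of \cite{KrCi01} is shorter if the magic-basis facts are already available, which is the situation of this paper, since it invokes $U_{mb}$ and the $SU(2)\otimes SU(2)\leftrightarrow SO(4)$ correspondence anyway in the four-qubit analysis of Sec.~\ref{secExamples}.
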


This lemma together with the fact that the eigenbasis of any two--qubit density matrix with completely mixed reduced states is
maximally entangled implies the following corollary.

\begin{corollary} \label{twoQubits} Let $\rho$, $\sigma$ be two--qubit density matrices with completely mixed reduced states.
Then $\rho\simeq_{LU} \sigma$ iff
$\mbox{eig}(\rho)=\mbox{eig}(\sigma)$.
\end{corollary}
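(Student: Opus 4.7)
The \emph{only if} direction is immediate, since LU operations are unitary and therefore preserve the spectrum of $\rho$. The content of the statement is the converse, so I will focus on that.

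My plan for the \emph{if} direction is to combine two ingredients that have already been established just before the corollary. First, the singular value decomposition argument applied to $\Lambda$ shows that any two-qubit state $\rho$ with maximally mixed marginals is LU--equivalent to a Bell--diagonal state $\one+\sum_k D_{kk}\Sigma_k\otimes\Sigma_k$. Since the Bell basis is maximally entangled and LU operations map maximally entangled states to maximally entangled states, this means that \emph{both} $\rho$ and $\sigma$ admit spectral decompositions in which the eigenvectors form a maximally entangled basis of $\C^2\otimes\C^2$.

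Given this, assume $\mbox{eig}(\rho)=\mbox{eig}(\sigma)$. I would then write spectral decompositions
\begin{equation*}
\rho=\sum_{i=1}^{4}\lambda_i\proj{\Psi_i},\qquad \sigma=\sum_{i=1}^{4}\lambda_i\proj{\Phi_i},
\end{equation*}
with the same ordering of the eigenvalues $\lambda_i$, and with $\{\ket{\Psi_i}\}$ and $\{\ket{\Phi_i}\}$ both maximally entangled bases (which is possible by the remark above). Lemma \ref{LemmaME} then supplies local unitaries $U_1,U_2$ and phases $\gamma_i$ such that $\ket{\Psi_i}=e^{i\gamma_i}U_1\otimes U_2\ket{\Phi_i}$ for all $i$. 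The global phases drop out on taking projectors, so $\proj{\Psi_i}=(U_1\otimes U_2)\proj{\Phi_i}(U_1\otimes U_2)^\dagger$, and summing against the common weights $\lambda_i$ yields $\rho=(U_1\otimes U_2)\,\sigma\,(U_1\otimes U_2)^\dagger$, which is the desired LU--equivalence.

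The only subtlety that needs mild care is the possibility of degenerate eigenvalues: a priori, different orthonormal bases of a given eigenspace need not all be maximally entangled. However, within any eigenspace I am free to choose the basis, and the Bell-diagonal representative produced by the singular value decomposition exhibits one explicit maximally entangled choice on each side, so I can always match them up with a common ordering before invoking Lemma \ref{LemmaME}. This is the only place where the hypothesis $\rho_1=\sigma_1=\one/2$, $\rho_2=\sigma_2=\one/2$ is essential, and it is exactly what I expect to be the main (mild) obstacle to confirm in a careful write-up.
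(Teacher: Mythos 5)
Your proposal is correct and follows exactly the paper's route: the Bell--diagonal (SVD) argument supplies maximally entangled eigenbases for both states, and Lemma \ref{LemmaME} then maps one such basis onto the other by local unitaries, with the phases cancelling on the projectors. Your explicit handling of the degenerate case via the Bell--diagonal representative is a sound elaboration of what the paper leaves implicit.
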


Let us now consider two LU--equivalent states, $\rho,\sigma$ with $\rho_i=\sigma_i=\one$ for $i=1,2$ and derive some conditions on the local unitary operations,
which transform $\sigma$ into $\rho$. First we apply local unitaries, $W_i,V_i$ such that $\bar{\rho}=W_1W_2\rho W_1^\dagger W_2^\dagger=\one+ \sum_i (D_\rho)_i \Sigma_i \Sigma_i$ and $\bar{\sigma}=V_1V_2\sigma V_1^\dagger V_2^\dagger=\one+ \sum_i (D_\sigma)_i \Sigma_i \Sigma_i$, with $D_\rho=D_\sigma=\mbox{diag}(\lambda_1,\lambda_2,\lambda_3)$. We choose w.l.o.g. the order of $\lambda_i$ such that if there is no degeneracy $\lambda_1> \lambda_2> \lambda_3$, else $\lambda_1=\lambda_2$. If $D_\rho$ is not proportional to the identity it is easy to see that $\bar{\rho}=\bar{U}_1 \bar{U}_2 \bar{\sigma} \bar{U}^\dagger_1 \bar{U}^\dagger_2$ implies that $\bar{U}_i$ is of the form $Z(\alpha_i) X^{k_i}$, for some phase $\alpha_i$ and $k_i\in \{0,1\}$ \footnote{This can be easily seen by noting that the condition $O_1D_\rho O_2^T=D_\rho$ implies that $O_iD^2O_i^T=D^2$, for $i=1,2$, which defines $O_i$ uniquely up to $R_z(\alpha)R_x(\pi)^k$, for $k=0,1$.}. Thus, if $\rho$ and $\sigma$ denote for instance the reduced state of system $1$ and $2$ of some multipartite state, $\ket{\Psi}$, $\ket{\Phi}$ respectively, then $\ket{\Psi}\simeq_{LU} \ket{\Phi}$ iff Eq. (\ref{LU}) is fulfilled for $V_1,V_2$ and $W_1,W_2$ such that
$W_1W_2\rho W_1^\dagger W_2^\dagger=\bar{\sigma}=V_1V_2\sigma V_1^\dagger V_2^\dagger=\one+ \sum_i (D_\rho)_i \Sigma_i \Sigma_i$ where $D_\rho=\mbox{diag}(\lambda_1,\lambda_2,\lambda_3)$ with $\lambda_i$ sorted as mentioned above.

Otherwise, if $D_\rho$ is proportional to the identity and $\rho\neq\one$, we apply the local unitaries $V_i$ and $W_i$ defined above and denote the resulting states again by $\rho,\sigma$ respectively. In this case we find $\rho=\sigma=\one-\lambda \proj{\Psi^-}$ for some $\lambda\neq 0$. Then, any pair of unitaries $U_1$, $U_2$ which transforms $\sigma$ into $\rho$ must fulfill that $U_2=U_1$. Hence, in this case we have that if $\rho$ and $\sigma$ denote for instance the reduced state of system $1$ and $2$ of some multipartite state, $\ket{\Psi}$, $\ket{\Phi}$ respectively, then $\ket{\Psi}\simeq_{LU} \ket{\Phi}$ iff Eq. (\ref{LU}) is fulfilled for $V_1=V_2=\one$, $k_1=k_2=0$, $\alpha_1=\alpha_2$, and $W_1=W_2=e^{i\beta_1 X_i} e^{i\gamma_1 Z_i}$, for some phases $\beta_1,\gamma_1$. Note that if $\rho_{12}=\one-\lambda \proj{\Psi^-}$ with
$\lambda\neq 1$ then $\ket{\Psi}\simeq_{LU}\ket{\Phi}$
implies that $\phantom{.}_{12}\langle \Psi^- \ket{\Psi}\simeq_{LU} \phantom{.}_{12}\langle \Psi^-
\ket{\Phi}$ since $U \otimes U\ket{\Psi^-}=\ket{\Psi^-}$ for any unitary $U$. Thus, similar to
the case where $\rho_i\neq \one$ one would simply measure those
systems where the reduced state is a full rank Werner state \cite{Wer89}.

In the remaining case, that is if $\rho=\one$, it is clear that considering the two--qubit reduced state will not help us to find any condition on the local unitaries.

So far we have seen that whenever there exist two systems $i,j$ such that $\rho_{ij}$
is not LU--equivalent to $\one+\lambda\sum_i\Sigma_i\otimes
\Sigma_i$, then the unitaries $V_i,V_j,W_i,W_j$ and $k_i,k_j$ in Eq. (\ref{LU}) can be
easily determined. We are going to show next that in this case also
other unitaries, $V_l$ and $W_l$, for $l\not\in \{i,j\}$ can be
easily computed. As before we consider the case where $\rho_i=\one$
$\forall i$. In the following lemma we will say that the unitaries can be determined by considering a certain operator, if they can be determined using the fact that the operator for the state $\ket{\Psi}$ and the one for the state $\ket{\Phi}$ must be LU--equivalent if $\ket{\Psi}\simeq_{LU} \ket{\Phi}$.

\begin{lemma}
\label{Lemma3}
If there exist systems $i,j$ such that $\rho_{ij}$ is not
LU--equivalent to $\one+\lambda\sum_i\Sigma_i\otimes \Sigma_i$, for
any $\lambda \in \R$, then, for any system $l$ for which either
$\rho_{il}\neq \one$ or $\rho_{jl}\neq \one$, $V_l, W_l$ and $k_l$
can be determined by either considering $\rho_{il}$ or $\rho_{jl}$
or by considering $\tr_i(\rho_{ij}\rho_{il})$ or
$\tr_j(\rho_{ij}\rho_{jl})$.
\end{lemma}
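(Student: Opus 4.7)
I would argue by cases on the structure of $\rho_{il}$. The hypothesis ``$\rho_{il}\neq\one$ or $\rho_{jl}\neq\one$'' lets me assume without loss of generality that $\rho_{il}\neq\one$; the other option is handled symmetrically using $\tr_j(\rho_{ij}\rho_{jl})$. Since $\rho_{ij}$ is not LU--equivalent to $\one+\lambda\sum_k\Sigma_k\otimes\Sigma_k$, its Pauli--Bloch correlation matrix is not proportional to the identity, so the two--qubit analysis of Section~\ref{Sectwoquits} applied to $\rho_{ij}$ already pins down $V_i,W_i,V_j,W_j,k_i,k_j$ up to the standard phase--gate ambiguity.

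If $\rho_{il}$ is itself not LU--equivalent to $\one+\mu\sum_k\Sigma_k\otimes\Sigma_k$, I apply the same two--qubit analysis to $\rho_{il}$, extracting $V_l,W_l,k_l$ directly (while reconfirming $V_i,W_i,k_i$). The interesting case is the opposite one: $\rho_{il}\simeq_{LU}\one+\mu\sum_k\Sigma_k\otimes\Sigma_k$ with $\mu\neq 0$. Here I pass to the Hermitian operator $\tau_{jl}\equiv\tr_i(\rho_{ij}\rho_{il})$ on systems $j,l$. Using $\tr_i[U_i A U_i^\dagger]=\tr_i[A]$ one checks that $\tau^\Psi_{jl}=(U_j\otimes U_l)\,\tau^\Phi_{jl}\,(U_j\otimes U_l)^\dagger$, so $\tau_{jl}$ is LU--covariant on $jl$.

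Expanding $\rho_{ij}=\one+\sum_{k,m}\alpha_{km}\Sigma_k\otimes\Sigma_m$ and $\rho_{il}=\one+\sum_{k,m}\beta_{km}\Sigma_k\otimes\Sigma_m$ (the linear single--system terms drop out because $\rho_i=\rho_j=\rho_l=\one$), multiplying, and using $\Sigma_k\Sigma_{k'}=\delta_{kk'}\one+i\epsilon_{kk'n}\Sigma_n$ together with $\tr(\Sigma_n)=0$, the partial trace over $i$ yields
\[
\tau_{jl}=2\,\one_{jl}+2\sum_{m,m'}(\alpha^T\beta)_{mm'}\,\Sigma_m\otimes\Sigma_{m'}.
\]
Thus the Bloch correlation matrix of $\tau_{jl}$ on $jl$ is $\alpha^T\beta$. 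By hypothesis the SVD of $\alpha$ has diagonal part not proportional to the identity, while the SVD of $\beta$ yields $\mu$ times an orthogonal matrix. Since multiplication by an orthogonal matrix on either side and by a nonzero scalar preserves the shape of the singular--value spectrum, the singular values of $\alpha^T\beta$ equal $|\mu|$ times those of $\alpha$ and are therefore not all equal. Hence $\tau_{jl}$ is itself not of the symmetric form, and applying the two--qubit analysis of Section~\ref{Sectwoquits} to $\tau_{jl}$ (viewed as a Hermitian operator with identity marginals, exactly as in that analysis) yields $V_l,W_l,k_l$ consistently with the $V_j,W_j,k_j$ already fixed by $\rho_{ij}$.

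The main obstacle is this singular--value computation in the $\rho_{il}$--symmetric case: I must ensure that the non--degeneracy of $\rho_{ij}$'s correlation matrix genuinely survives multiplication by $\beta$ and the partial trace over $i$, producing a non--degenerate correlation matrix on $jl$. Once verified, everything else is routine Pauli algebra combined with a direct appeal to the two--qubit results of Section~\ref{Sectwoquits}. The remaining situation where $\rho_{il}=\one$ but $\rho_{jl}\neq\one$ is handled by the symmetric argument using $\tr_j(\rho_{ij}\rho_{jl})$, with the same singular--value reasoning applied to the Pauli--coefficient matrix of $\rho_{jl}$.
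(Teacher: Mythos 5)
Your proposal is correct and follows essentially the same route as the paper's proof: handle the case where $\rho_{il}$ (or $\rho_{jl}$) is itself not of the Werner--like form directly, and otherwise pass to $\tr_i(\rho_{ij}\rho_{il})$, whose correlation matrix is the product $\tilde{\Lambda}^T\Lambda$ of the two Bloch correlation matrices, which inherits non--proportionality to an orthogonal matrix and so, with $V_j,W_j$ already fixed, determines $V_l,W_l$ and $k_l$. Your explicit singular--value bookkeeping (singular values of $\alpha^T\beta$ are $|\mu|$ times those of $\alpha$) just spells out the step the paper states as ``$(\tilde{\Lambda})^T\Lambda$ is not orthogonal.''
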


\begin{proof}
If $\rho_{il}$ or $\rho_{jl}$ is not LU--equivalent to $\one+\lambda\sum_i\Sigma_i\otimes \Sigma_i$, for some $\lambda$, then $V_l, W_l$ and $k_l$ can be determined as shown above. Otherwise, we assume without loss of generality that $\rho_{il}\neq \one$. Then we have that $\rho_{il}=\one+\sum_{i_1,i_2=1}^3
\Lambda_{i_1 i_2}\Sigma_{i_1}\otimes \Sigma_{i_2}$ with
$\Lambda$ proportional to a real orthogonal matrix and $\rho_{ij}=\one+\sum_l
\tilde{\Lambda}_{l_1 l_2} \Sigma_{l_1}\otimes \Sigma_{l_2}$ with $\tilde{\Lambda}$ not proportional to a real orthogonal matrix. Then, we find
$\tr_i(\rho_{ij}\rho_{il})=\one +
\sum_{i_1,i_2}\tilde{\Lambda}_{i_1 l_2}\Lambda_{i_1 i_2}\Sigma_{l_2}
\Sigma_{i_2}$, where the matrix $(\tilde{\Lambda})^T\Lambda$
is not orthogonal. Since the unitaries $W_j,V_j$ are already fixed, the equation $W_j W_l \tr_i(\rho_{ij}\rho_{il}) W_j^\dagger W_l^\dagger=V_j V_l \tr_i(\sigma_{ij}\sigma_{il}) V_j^\dagger V_l^\dagger$ 
determines $V_l,W_l$ and $k_l=0$.
\end{proof}

\subsection{States where there exists a system $i$ with $\rho_i\neq \one$}

Let us now turn to the case where there exists at least one system $i$ such that its reduced state is not completely mixed. Without loss of generality we chose $i=1$. Let us assume that the states of interest, $\ket{\Psi}$ and
$\ket{\Phi}$ do have sorted trace decomposition. That is, in
particular the unitaries $V_1,W_1$ which make $\rho_1$ and
$\sigma_1$ diagonal in the computational basis (see Eq. (\ref{LU}))
have already been applied. Then we know that $U_1=Z(\alpha_1)$ for some phase $\alpha_1$. We will present now several methods to determine the unitaries $V_i,W_i$ of Eq (\ref{LU}). As in \cite{Kr09} the idea is to construct out of the states $\ket{\Psi}$ and $\ket{\Phi}$ non--degenerate $2\times 2$ matrices which must be LU--equivalent in case $\ket{\Psi}$ and $\ket{\Phi}$ are, e.g. $U\rho_i U^\dagger=\sigma_i$, or $\tr_1[(\rho_1\otimes \one_i)\rho_{1i}]=
U_i \tr_1[(\sigma_1\otimes \one_i)\sigma_{1i}] U_i^\dagger$. Those necessary conditions of LU--equivalence can then be used to fix $W_i, V_i$ and $k_i$ in Eq. (\ref{LU}). The local phase gates, which cannot be fixed in this way, must be determined at the end using one of the lemmata in Sec. \ref{SecLU}.

Since the states $\ket{\Psi}$ and $\ket{\Phi}$ have sorted trace decomposition, we have $\ket{\Psi}=\sqrt{p_1} \ket{0}\ket{\Psi_0}+ \sqrt{1-p_1}
\ket{1}\ket{\Psi_1},$ with
$\bra{\Psi_i}\Psi_j\rangle=\delta_{i,j}$ and $ p_1> 1/2$ and
$\ket{\Phi}=\sqrt{p_2} \ket{0}\ket{\Phi_0}+ \sqrt{1-p_2} \ket{1}\ket{\Phi_0}$ with
$\bra{\Phi_i}\Phi_j\rangle=\delta_{i,j}$ and $p_2> 1/2$, which is just the Schmidt decomposition for the bipartite splitting, system $1$ versus the rest. Then we have that the states are LU--equivalent iff 1) $p_1=p_2$ and 2)
there exist phases, $\phi$, $\alpha_1$ and unitaries $U_j$ such
that \bea \label{oneNot}\ket{\Psi_0}&=& e^{i\gamma_1}
\bigotimes_{j\neq 1} U_j \ket{\Phi_0}\\ \nonumber \ket{\Psi_1}&=&
e^{i\gamma_2} \bigotimes_{j\neq 1} U_j \ket{\Phi_1},\eea

where $\gamma_1=\phi+\alpha_1$ and $\gamma_2=\phi-\alpha_1$ (see Eq. (\ref{cond1})). Note that the last two conditions are fulfilled iff
$\beta_1\proj{\Psi_0}+\beta_2\proj{\Psi_1}=\bigotimes_{j\neq 1} U_j
(\beta_1\proj{\Phi_0}+\beta_2\proj{\Phi_1})\bigotimes_{j\neq 1} U_j^\dagger$
for all values of $\beta_1,\beta_2$.

There are several ways now to compute the unitaries $V_i,W_i$ in
Eq. (\ref{LU}). First of all, if $\rho_i\neq \one$ then
$U_i=Z(\alpha_i)$ for some phase $\alpha_i$. Let us now consider
the case where $\rho_i=\one$. If $\rho_{1i}\neq \rho_1\otimes \one$
Eq. (\ref{Yi}) can be used to compute $V_i,W_i$, and $k_i$. In
particular, we would consider one of the matrices $ B_{l,m}\equiv
\tr_{\neq i}(\ket{\Psi_l}\bra{\Psi_m}+ h.c.)$ or $C_{l,m}\equiv
\tr_{\neq i}(i\ket{\Psi_l}\bra{\Psi_m}+ h.c.)$ for some $l,m$ and
diagonalize this matrix in order to compute $W_i, V_i$ and $k_i$.

It is the aim of this section to derive some other methods to
determine those unitaries. First, we will use the fact that Eq.
(\ref{Yi}) must be fulfilled for any values of $l,m$ if the states
are LU--equivalent. Considering certain combinations of those
equations will lead to other approaches to determine the unitaries
$V_i,W_i$ and the bit value $k_i$. In the second part of this
section we will show how a combination of the Eqs. (\ref{oneNot})
can be used to compute those unitaries.

If $\rho_{1i}\neq \rho_1\otimes \one$ then $V_i$ and $W_i$ can be easily computed as follows. First of all, it is clear that if
$\rho_{1}\neq \sigma_{1}$ then they states are not LU--equivalent. Thus, we assume that $\rho_{1}= \sigma_{1}$. The fact that $\rho_1\neq\one$ and $\rho_i=\one$ implies that $\rho_{1i}= \one+a\z\one+\sum_{j_1,j_2=1}^3
\Lambda_{j_1j_2}\Sigma_{j_1}\otimes \Sigma_{j_2}$, for some $a\in \R$ and where $\Lambda\neq 0$ since
$\rho_{1i}\neq \rho_1\otimes \one$. Similarly we have $\sigma_{1i}= \one+a\z\one+\sum_{j_1,j_2=1}^3
\Gamma_{j_1j_2}\Sigma_{j_1}\otimes \Sigma_{j_2}$. As mentioned before, the two states are LU--equivalent, i.e.  $\rho_{1i}=U_1U_i \sigma_{1i} U^\dagger_1 U^\dagger_i$ iff there exists a real orthogonal matrix $O_i$ and a phase $\alpha_1$ such that
$\Lambda=O_1\Gamma O_i^T$, where $O_1=R_z(\alpha_1)$.
As explained in Sec. II A, if $\Lambda$ is not orthogonal, then the unitaries $V_i$, $W_i$ can be easily computed. Otherwise, we use the following
necessary condition for LU--equivalence: $(\rho_1 \otimes
\one_i)\rho_{1i}=U_1 U_i[(\sigma_{1}\otimes \one_i)\sigma_{1i}]
U^\dagger_1 U^\dagger_i$ and therefore $\tr_1[(\rho_1 \otimes
\one_i)\rho_{1i}]=U_i\tr_1[(\sigma_1 \otimes
\one_i)\sigma_{1i}]U_i^\dagger$. Since
$\tr_1[(\rho_1 \otimes
\one_i)\rho_{1i}]=\one+a \sum_j \lambda_{3j}\Sigma_j$ the
equation above can be used to determine $V_i$ and $W_i$ (and
$k_i=0$) as those operators which diagonalize
$\tr_1[(\rho_1 \otimes
\one_i)\rho_{1i}]$ and $\tr_1[(\sigma_1 \otimes
\one_i)\sigma_{1i}]$
respectively. That is, unless $\Lambda_{3j}=0$ $\forall j$ $V_i$ and
$W_i$ are defined by the equation  $V_i\tr_1[(\rho_1 \otimes
\one_i)\rho_{1i}]
V_i^\dagger=W_i\tr_1[(\sigma_1 \otimes
\one_i)\sigma_{1i}]W_i^\dagger=\mbox{diag}(\gamma_1,\gamma_2)$,
for some $\gamma_i$. If $\Lambda_{3i}=0$ $\forall i$ $\Lambda$ cannot
be orthogonal and therefore the unitaries can be determined as
explained in Sec. \ref{Sectwoquits}. Thus, if $\rho_{1i}=U_1U_i
\sigma_{1i} U^\dagger_1 U^\dagger_i$ the methods described above
will lead to the unitaries $V_i$ and $W_i$ in Eq. (\ref{LU}) unless
$\rho_{1i}= \rho_1\otimes \one$.


Another method to compute the unitaries is the following. Instead
of considering the single equation $\ket{\Psi}=U_1\ldots
U_n\ket{\Phi}$ we use the fact that the basis for the first system
has been fixed. Therefore, we can use both equations given in Eq.
(\ref{oneNot}). Note that the states $\ket{\Psi_0}$, $\ket{\Psi_1}$
and the states $\ket{\Phi_0}$, $\ket{\Phi_1}$ are orthogonal
respectively. In \cite{WaSh00} it has been shown that two orthogonal
multipartite pure states can be perfectly distinguished using local
operations. We are going to use this result now in order to
determine the unitaries $V_i,W_i$. It is easy to see that for any
pair of orthogonal states, $\ket{\Psi_0}$, $\ket{\Psi_1}$, there
exist local unitaries, $W_i$ such that \cite{WaSh00} \bea M_i\equiv
\tr_{\neg i}(\ket{\Psi_0}\bra{\Psi_1})=W_i N_i W_i^\dagger,\eea
where $N_i$ is a off--diagonal matrix with $(N_i)_{1,2}=a_i,
(N_i)_{2,1}=b_i$, for some complex numbers $a_i,b_i$. If $|a_i|\neq
|b_i|$, i.e. if $N_i N_i^\dagger\not\propto \one$, it is easy to
see that by imposing the condition that $\vert a_i\vert > \vert
b_i\vert$ this equation determines $W_i$ uniquely (up to a phase
gate). If $\vert a_i\vert=\vert b_i\vert \neq 0$, $M_i$ is, up to
a global phase, a hermitian traceless matrix. Thus, in this case we
would chose $W_i$ such that $M_i=e^{i \bar{\alpha}} W_i^\dagger D_i
W_i$, for some phase $\bar{\alpha}$ and $D_i$ diagonal. Defining
$V_i$ in the same way for the state $\ket{\Phi}$ we have that
$\ket{\Psi}\simeq_{LU} \ket{\Phi}$ iff Eq. (\ref{LU}) has a
solution for the so chosen matrices $W_i,V_i$ \footnote{
Note that if $\rho_{1i}=\rho_1\otimes \one$ the condition that
$\tr_{\neg i}(\ket{\Psi_0}\bra{\Psi_1})\propto
\bra{0}_1\rho_{1i}\ket{1}_1$ is local unitarily equivalent to
$\tr_{\neg i}(\ket{\Phi_0}\bra{\Phi_1})$ cannot, like any other
necessary condition of LU--equivalence considering just
$\rho_{1i}$, enable us to determine the unitaries $V_i$ and $W_i$.
In fact, in this case we find $M_i=0$.}.

Before concluding this section we would like to mention another method
to compute the unitaries in Eq. (\ref{oneNot}) for a general state
with $\rho_1\neq \one$. It is based on the following observation.
Let us denote by $\ket{\Psi_0}$ ($\ket{\Psi_1}$) a (unnormalized)
state describing systems $2,\ldots,n$ ($2^\prime,\ldots,n^\prime$)
respectively. Then $\bra{\Psi^-}_{ii^\prime}\Psi_0,\Psi_1\rangle=0$
iff the state $\ket{\Psi}=\ket{0}\ket{\Psi_0}+\ket{1}\ket{\Psi_1}$
is either a product state in the bipartite splitting system $1$
versus the rest, or system $i$ versus the rest. This can be easily
verified as follows. We write
$\ket{\Psi_k}=\ket{0}_i\ket{\Psi_{k0}}+\ket{1}_i\ket{\Psi_{k1}}$,
for $k\in \{0,1\}$. Then
$\bra{\Psi^-}_{ii^\prime}\Psi_0,\Psi_1\rangle=0$ iff either a)
$\ket{\Psi_{00}}\ket{\Psi_{11}}=0$ and
$\ket{\Psi_{01}}\ket{\Psi_{10}}=0$ which implies that either
$\ket{\Psi}=\ket{k}_1\ket{\Psi_k}$ or
$\ket{\Psi}=\ket{k}_i(\ket{0}_1\ket{\Psi_{0k}}+\ket{1}_1\ket{\Psi_{1k}})$ for $k\in \{0,1\}$;
or b) $\ket{\Psi_{00}}=a\ket{\Psi_{01}}$ and
$\ket{\Psi_{10}}=a\ket{\Psi_{11}}$, for some $a$. In this case we
find $\ket{\Psi}=(a\ket{0}+\ket{1})_i\otimes
(\ket{0}\ket{\Psi_{01}}+\ket{1}\ket{\Psi_{11}})$. Note that if
$\ket{\Psi}$ is a product state in any bipartite splitting, system
$i$ versus the rest, then the unitaries $V_i,W_i$ and the bit value
$k_i$ in Eq. (\ref{LU}) can obviously be easily determined. If $\ket{\Psi}$
is not a product state in this splitting then we can combine the
two equations in Eq. (\ref{oneNot}) to

\bea \label{PsimProj}
\bra{\Psi^-}_{ii^\prime}\Psi_0,\Psi_1\rangle=\\ \nonumber =e^{i(\gamma_1+\gamma_2)}\bigotimes_{j\neq
i,1} U_j \bigotimes_{j^\prime\neq i^\prime,1^\prime} U_{j^\prime}
\bra{\Psi^-}_{ii^\prime}\Phi_0,\Phi_1\rangle,\eea where we used
that $\ket{\Psi^-}=U\otimes U\ket{\Psi^-}$ for any unitary $U$.
This approach will be useful if there are only a few unitaries not
determined. For instance, if $\ket{\Psi}$ is a three--qubit state.
Choosing w.l.o.g. $i=2$ we have
$\bra{\Psi^-}_{22^\prime}\Psi_0,\Psi_1\rangle=e^{i(\gamma_1+\gamma_2)}
U_3 U_{3^\prime} \bra{\Psi^-}_{22^\prime}\Phi_0,\Phi_1\rangle$,
which can then be used to determine $U_3$, or equivalently $V_3,W_3$ and $k_3$. Of course, the
projection onto the singlet state can also be performed on more
systems.

In summary, in this subsection we have explained some simple ways
to compute $V_i, W_i$ and $k_i$ for states which have the
properties that $\rho_i=\one$ and that there exists some
system $j$ such that $\rho_j\neq  \one$. For states with
$\rho_{ji}\neq \rho_j\otimes \one$, the unitaries can be easily
computed using either that $(\rho_1\otimes \one)
\rho_{1i}\simeq_{LU}(\sigma_1\otimes \one) \sigma_{1i}$ is a
necessary condition for LU--equivalence or that the states
$\ket{\Psi_0}$ and $\ket{\Psi_1}$ in Eq. (\ref{oneNot}) are orthogonal. For general
states (not requiring that $\rho_{ji}\neq \rho_j\otimes \one$) where there exists a system $j$ with $\rho_j\neq \one$ Eq.
(\ref{PsimProj}) (and its generalizations) can be used to find new
conditions on the unitaries. Note that if $\rho_1\neq  \one$ and $\rho_{12}= \rho_1\otimes \one$ then we have \bea \label{rho12}
\ket{\Psi}=\sqrt{p}\ket{0}(\ket{0}\ket{\Psi_{00}}+\ket{1}\ket{\Psi_{01}})
\\ \nonumber
+\sqrt{1-p}\ket{1}(\ket{0}\ket{\Psi_{10}}+\ket{1}\ket{\Psi_{11}}),\eea
with $\bra{\Psi_{ij}}\Psi_{kl}\rangle
=1/2\delta_{ik}\delta_{jl}$, where $p$ and $1-p$ denote the eigenvalues of $\rho_1$.


\section{Examples}

\label{secExamples}

We are going to employ now the algorithm presented in \cite{Kr09} and the results shown in the previous section to characterize the LU--equivalence classes of up to five qubits. We will show that in all these cases it is not necessary to determine some unitaries as functions of some others, but that it is always possible to determine them directly.

\subsection{Two--qubit states}
\label{subsectwoQubits}

The standard form of a two--qubit state is $
\ket{\Psi}=\lambda_1\ket{00}+\lambda_2\ket{11}$, with $\lambda_1\geq \lambda_2\geq 0$, which coincides
with the Schmidt decomposition \cite{NiCh00}. It is a well--known
fact that bipartite states are LU--equivalent iff their Schmidt
coefficients coincide. Let us now demonstrate how this result can
be rederived with the method presented in \cite{Kr09} for two
qubits. If $\lambda_1\neq \lambda_2$, i.e. $\rho_i\neq \one$ then,
$\ket{\Psi}\simeq_{LU}\ket{\Phi}$ iff $eig(\rho_1)=eig(\sigma_1)$,
i.e. iff the Schmidt coefficients $\lambda_i$ are the same. For
$\lambda_1=\lambda_2$ we have that $\rho_1=\rho_2=\one$ and
therefore the states are LU--equivalent iff $eig(\rho)=eig(\sigma)$
(Lemma \ref{twoQubits}), which is obviously the case.

\subsection{ Three--qubit states}

First we transform both states, $\ket{\Psi}$ and $\ket{\Phi}$ into their sorted
trace decomposition. If there exists some $i$ such that
$\rho_i\neq\one$, we know that $U_i=Z(\alpha_i)$. Without loss of
generality we assume $i=1$. Then we have that the states
$\ket{\Psi}=\sqrt{p_1}\ket{0}\ket{\Psi_0}+\sqrt{1-p_1}\ket{1}\ket{\Psi_1}$,
for some $p_1>1/2$ and $\langle \Psi_i \ket{\Psi_j}=\delta_{ij}$ and
$\ket{\Phi}=\sqrt{p_2}\ket{0}\ket{\Phi_0}+\sqrt{1-p_2}\ket{1}\ket{\Phi_1}$,
for some $p_2>1/2$ and $\langle \Phi_i \ket{\Phi_j}=\delta_{ij}$, are LU--equivalent iff 1) ${\mbox
eig}(\rho_1)={\mbox eig}(\sigma_1)$, i.e. iff $p_1=p_2$,
and 2) there exist local unitaries $U_2,U_3$ and two phases, $\phi$, $\alpha_1$ such that \bea \label{twoStates}\ket{\Psi_0}&=&e^{i\gamma_1} U_2\otimes U_3 \ket{\Phi_0} \\
\nonumber \ket{\Psi_1}&=&e^{i\gamma_2} U_2\otimes U_3
\ket{\Phi_1},\eea

where $\gamma_1=\phi+\alpha_1$ and $\gamma_2=\phi-\alpha_1$. Since the states are not LU--equivalent if $p_1\neq p_2$ we assume that $p_1=p_2$ and show now in detail how the unitaries can be computed. According to the method summarized in Sec. II we distinguish the two cases 1) $\rho_{12}\neq \rho_1\otimes \one$ and 2) $\rho_{12}= \rho_1\otimes \one$. Since the rank of $\rho_{12}$ cannot be larger than two, the second case is only possible if $p=1$, i.e. the states $\ket{\Psi}$ is a product state. Then, the two states are LU--equivalent iff the two--qubit states $\phantom{.}_1\langle 0\ket{\Psi}$ and $\phantom{.}_1\langle 0\ket{\Phi}$ are (see Sec \ref{subsectwoQubits}). In the first case we have that either 1a) at least one of the two states $\ket{\Psi_i}$ is not maximally entangled or 2b) both are maximally entangled. To investigate the case 1a) we assume without loss of generality that $\ket{\Psi_0}$ is not maximally entangled and denote by $W_i$ ($V_i$) the local unitaries which map $\ket{\Psi_0}$ ($\ket{\Phi_0})$) into its standard form respectively, i.e. $\ket{\Psi_0}=W_1^\dagger W_2^\dagger(\sqrt{q_1}\ket{00}+\sqrt{1-q_1}\ket{11})$, with $q_1 > 1/2$ ($\ket{\Phi_0}=V_1^\dagger V_2^\dagger(\sqrt{q_2}\ket{00}+\sqrt{1-q_2}\ket{11}$ with $q_2>1/2$). Obviously, $\ket{\Psi_0}\simeq_{LU}\ket{\Phi_0}$ iff $q_1=q_2$. In this case, the most general unitaries which transform $\ket{\Phi_0}$ into $\ket{\Psi_0}$, i.e. $\ket{\Psi_0}=e^{i\gamma_1} U_2\otimes U_3 \ket{\Phi_0}$ are of the form  $U_i=W_i Z(\alpha_i) V_i^\dagger $, for some phase $\alpha_i$. Thus, we have that $\ket{\Psi}\simeq_{LU}\ket{\Phi}$ iff there exists phases $\alpha_i$ such that Eq. (\ref{LU}) is fulfilled for $k_1=k_2=k_3=0$, $V_1=W_1=\one$, and $V_i,W_i$ as defined above. This condition can then be easily checked using Lemma \ref{LemmaPhase}. In case 1b) we have that both, $\ket{\Psi_0}$ and $\ket{\Psi_1}$ are maximally entangled and are therefore LU--equivalent to $\ket{\Phi^\pm}$. Thus, $\ket{\Psi}$ is LU--equivalent to $\ket{\Phi}$ in this case iff $\phantom{.}_1\langle k \ket{\Phi}$ is maximally entangled for $k=0,1$. Since any state in this class is LU--equivalent to the state $\sqrt{p_1}\ket{0}\ket{\Phi^+}+\sqrt{1-p_1}\ket{1}\ket{\Phi^-}$, the unitaries which map two states within this class into each other can be easily computed.

Let us now consider the remaining case where all single qubit reduced states are completely mixed. Since $\rho_1=\rho_2= \one$ the eigenbasis of $\rho_{12}$ is
maximally entangled and therefore LU--equivalent to the
Bell--basis (see Sec \ref{Sectwoquits}). Thus, any state with $\rho_i=  \one$ $\forall i$ is
LU--equivalent to
$\ket{\Psi}=\ket{\Phi^+}\ket{0}+\ket{\Phi^-}\ket{1}=\one \otimes \one \otimes H
\ket{\Psi_0}$, where $\ket{\Psi_0}=1/\sqrt{2}(\ket{000}+\ket{111})$ denotes the GHZ--state.

In summary we obtained the following necessary and sufficient condition for LU--equivalence: The two three--qubit states $\ket{\Psi}$ and $\ket{\Phi}$ are LU--equivalent iff one of the following conditions are fulfilled:

\bi
\item[1)] $E_i(\ket{\Psi})=E_i(\ket{\Phi})=1$ $\forall i$ (i.e. $\rho_i=\one \forall i$).
    \item[2)] There exists some system $i$ such that $E_i(\ket{\Psi})=E_i(\ket{\Phi})=0$ and $E_j(\ket{\Psi})=E_j(\ket{\Phi})$ for some system $j\neq i$.
\item[3)] There exists some system $i$ such that $0<E_i(\ket{\Psi})<1$ and $E_i(\ket{\Psi})=E_i(\ket{\Phi})$
    and either \bi \item[3a)] $E_j(_i\langle k\ket{\Psi})=E_j(_i\langle k\ket{\Phi})=1$ for $k=0,1$ for some system $j\neq i$ holds; or
\item[3b)] $E_j(_i\langle k\ket{\Psi})=E_j(_i\langle k\ket{\Phi})<1$ for one value of $k\in \{0,1\}$ and for the unitaries which can be easily and directly determined in this case there exists a bit string ${\bf k}$ and local phase gates such that Eq. (\ref{LU}) has a solution.
    \ei
\ei

For three qubits the polynomial invariants which define the different LU--equivalence classes are known \cite{GrRo98}. In \cite{StCaKr10} we will compare them to the criterion derived here and investigate the measures of entanglement which are required to identify the different classes.

This completes the solution to the LU--equivalence problem of three--qubit states. However, in order to illustrate the method presented in \cite{Kr09}, we will apply it to the most complicated case, where $\rho_i=\one $ $\forall i$. We will show now that all these states are LU--equivalent without using the fact that they are LU--equivalent to the GHZ--state, $\ket{\Psi_0}$. In other words, we will determine now the unitaries $\{U_i\}$ such
that $\ket{\Psi}=\ket{\Psi_0}=U_1U_2U_3\ket{\Phi}$, where
$\ket{\Phi}=S_1^\dagger S_2^\dagger S_3^\dagger \ket{\Psi}$. Here, $S_i$ are some fixed unitaries. Since the rank of
$\rho_{12}$ is two, we can compute $U_2$ as a function of $U_1$. We
find $U_2^\dagger\tr_{\neg 2}(\proj{i}_1
\proj{\Psi})U_2=U_2^\dagger\proj{i}_2 U_2=S_2^\dagger
\tr_{/2}(W_1^\dagger \proj{i}_1 W_1\proj{\Psi})S_2$, where
$W_1=U_1S_1^\dagger$. Since the rank of these matrices is one,
we have (due to the fact that $W_1$ is unitary) that either
$\bra{0}W_1^\dagger\ket{0}=0$ (which implies that
$\bra{1}W_1^\dagger\ket{1}=0$) or $\bra{1}W_1^\dagger\ket{0}=0$
(which implies that $\bra{0}W_1^\dagger\ket{1}=0$). Thus, $W_1=X^{k_1}Z(\alpha_1)$ (up to a global phase), for some $k_i\in \{0,1\}$ and some phase $\alpha_1$. Due to the symmetry of the state the same holds
true for all $W_i=U_iS_i^\dagger$. Thus, we have that
$\ket{\Psi}=U_1U_2U_3\ket{\Phi}$ iff there exists $k_1,k_2,k_3$ and
phases $\alpha_i$ such that $\ket{\Psi}=e^{i\alpha_0}\bigotimes_i Z(\alpha_i) X^{k_i} \ket{\Psi}$. Of course it is straightforward
to determine the remaining parameters, but in order to continue
with the algorithm we use Lemma \ref{LemmaPhase} to show for which values of $k_i$ the states are
up to phase gates local unitary equivalent. The first condition, 
$|\bra{\i}\Psi\rangle|=|\bra{\i}\Phi\rangle|$ $\forall \i$, implies that
$k_1=k_2=k_3$. Then we have $\ket{\Psi}$ is LU--equivalent to
$\ket{\Phi}$ iff there exist phases $\alpha_i$ such that
$\ket{\Psi}=e^{i\alpha_0}\bigotimes_i Z(\alpha_i) \ket{\Psi}$,
which is true iff $\alpha_0=0$ and $e^{i(\alpha_1+\alpha_2+\alpha_3)}=1$. Thus,
using the method above we found $U_i=Z(\alpha_i)X^{k_i}S_i$, with
$e^{i(\alpha_1+\alpha_2+\alpha_3)}=1$ and $k_1=k_2=k_3$.
The unitaries are not uniquely defined due to the symmetry of the
state.

\subsection{Four--qubit states}

In this subsection we will consider the LU--equivalence of two four--qubit states, $\ket{\Psi}$ and $\ket{\Phi}$. Similarly to the other
cases we transform both states, $\ket{\Psi}$ and $\ket{\Phi}$ into
their sorted trace decomposition. Like before the solution can of
course be found using the method summarized in Sec. II. However, we
will show here that also in this case it is possible to determine
the unitaries $V_i$ and $W_i$ (and the bit values $k_i$) directly.
That is, it will not be necessary to consider some of the unitaries
as variables and to determine those unitaries by solving the
equations which occur in Lemma \ref{LemmaPhase}.

According to the general method, we first distinguish the cases 1)
there exists some system $i$ with $\rho_i\neq \one$ and 2) $\rho_i=
\one$ for any system $i$. In the first case we choose $i=1$ and
know that $U_1=Z(\alpha_1)$ for some phase $\alpha_1$. Then we can
either have that 1a) $\rho_{12}\neq \rho_1\otimes \one$ or 1b)
$\rho_{12}= \rho_1\otimes\one$. In case 1a) $W_2,V_2$, and $k_2$ in
Eq. (\ref{LU}) can easily be determined using the methods presented
in Sec. \ref{SecAdd}. Then system $1$ and $2$ can be measured in
the computational basis leading to four two--qubit states. The
remaining unitary operators can then be easily found. If
$\rho_{12}= \rho_1\otimes\one$ (case 1b) we will show next that at least one of the unitaries $V_i,W_i$ and the bit--values $k_i$ can be determined for $i\in\{3,4\}$. First note that the eigenvalues of $\rho_{34}$ are
$p/2,p/2,(1-p)/2,(1-p)/2$ with $p\neq 1/2$ and therefore $\rho_{34}$ is neither $\one$ nor  LU--equivalent to $\one-\lambda
\proj{\Psi^-}$, for any value of $\lambda$. Thus, the unitaries
$W_i,V_i$ and $k_i$ for $i=3,4$ can be easily computed unless $\rho_{34}=\rho_3 \otimes \one$ (or $\rho_{34}=\one \otimes \rho_4$). However, in this case, since $\rho_{34}\neq \one$, the unitaries $V_3,W_3$ and the bit--value $k_3$ (or $V_4,W_4$ and the bit--value $k_4$ resp.) can be easily determined.

Let us now consider the remaining case where $\rho_i=\one$ $\forall
i$ (case 2). There, all two--qubit reduced states are LU--equivalent
to $\one+\sum_i \lambda_i \Sigma_i\otimes \Sigma_i$ and the reduced
states are LU--equivalent to each other iff the eigenvalues are the
same (Lemma \ref{twoQubits}). We distinguish now the two cases 2a) $\rho_{12}\neq \one$ and
2b) $\rho_{12}=\one$. In the first case $U_2$ can be determined as
a function of $U_1$ (see Sec \ref{SecAdd}). Let us apply local
unitaries to both states, $\ket{\Psi}$ and $\ket{\Phi}$ such that
$\rho_{12}, \rho_{34}$ and $\sigma_{12},\sigma_{34}$ are both
Bell--diagonal (see Lemma \ref{LemmaME}). We sort the eigenvalues in such a way that if there is three--fold degeneracy, then the states are such that $\rho_{12}=\rho_{34}=\one-\lambda \proj{\Psi^-}$ \footnote{Note that if $\rho_{12}=\one-\lambda \proj{\Psi^-}$ then $\rho_{34}=\one-\lambda \proj{\Psi^-}$ follows from the fact that all single qubit reduced states are completely mixed and that the state, describing all four systems is pure.}. The resulting states will
again be denoted by $\ket{\Psi}$, $\ket{\Phi}$ respectively. If $ \rho_{12}\neq \one-\lambda \proj{\Psi^-}$, then the unitaries can be easily determined (see Sec. \ref{SecAdd}). In the
"worst" case, where $\rho_{12}=\one-\lambda \proj{\Psi^-}$ we only
find $U_2=U_1$. We will show next that also in this case $U_2$, or
more precisely $V_2,W_2$ and $k_2$ in Eq. (\ref{LU}) can be
directly computed. That is we will not need to compute any of those
unitaries as a function of some others.

Using Lemma \ref{LemmaME} it is easy to see that any state
$\ket{\Psi}$ with $\rho_{12}=\one-\lambda \proj{\Psi^-}$, is
LU--equivalent (up to a global phase) to a state
$\ket{\Phi^+}\ket{\Phi^+}+e^{i\gamma_1}\ket{\Phi^-}\ket{\Phi^-}+
e^{i\gamma_2}\ket{\Psi^+}\ket{\Psi^+}+\sqrt{1-\lambda}e^{i\gamma_3}\ket{\Psi^-}\ket{\Psi^-}$,
for some phases $\gamma_i$. Since the operations $\Sigma_i\otimes \Sigma_i$ for $i\in \{1,2,3\}$ always change the sign of two states out of the four Bell states, we can choose $\gamma_1,\gamma_2 \leq \pi$. We are going to show next that two
states of this form with the choice $\gamma_1,\gamma_2 \leq \pi$ are LU--equivalent iff the complex coefficients
which occur here coincide.

Let us denote by $U_{mb}$ the $4\times 4$ unitary matrix, which
transforms the computational basis into the magic basis,
i.e. $U_{mb}\ket{00}=\ket{\Phi^+},U_{mb}\ket{01}=-i\ket{\Phi^-},
U_{mb}\ket{10}=\ket{\Psi^-},U_{mb}\ket{11}=-i \ket{\Psi^+}$. It is
a well--known fact that for any $U_1,U_2$ unitary we have that
$U_{mb}^\dagger U_1\otimes U_2 U_{mb}= O$, where $O$ is a real and
orthogonal $4\times 4$ matrix. Furthermore, it is easy to see that $O_i\equiv U_{mb}^\dagger U_i\otimes U_i
U_{mb}$ can be written as $O_i= \tilde{O}_i \oplus \proj{01}$
($U_{mb}\ket{01}=\ket{\Psi^-}$), where $\tilde{O}_i$ is a
three--dimensional rotation.

Since $\rho_{12}=\sigma_{12}=\one-\lambda \proj{\Psi^-}$ and therefore $ \rho_{34}=\sigma_{34}=\one-\lambda \proj{\Psi^-}$ we know that there exist local unitaries, $U_i$ which map $\ket{\Phi}$ into $\ket{\Psi}$ iff $U_2=U_1$ and $U_4=U_3$. Thus, we have $\ket{\Psi}\simeq_{LU}\ket{\Phi}$ iff there exist real orthogonal matrices, $O_i= \tilde{O}_i \oplus \proj{01}$, with $\tilde{O}_i$ is a
three--dimensional rotation such that
\bea \label{Psitilde}\ket{\tilde{\Psi}}\equiv
U_{mb}^\dagger\otimes U_{mb}^\dagger \ket{\Psi}=O_1\otimes O_3
\ket{\tilde{\Phi}}.\eea

Note that $\ket{\tilde{\Psi}}=\ket{00}\ket{00}-e^{i\gamma_1}\ket{10}\ket{10}-
e^{i\gamma_2}\ket{11}\ket{11}+\sqrt{1-\lambda}e^{i\gamma_3}\ket{01}\ket{01}$, and that the phases $\gamma_i$ are not local phases. Similarly we have $\ket{\tilde{\Phi}}\equiv U_{mb}^\dagger\otimes U_{mb}^\dagger \ket{\Phi} =e^{i\bar{\gamma}_0}(\ket{00}\ket{00}+e^{i\bar{\gamma}_1}\ket{10}\ket{10}+
e^{i\bar{\gamma}_2}\ket{11}\ket{11}+\sqrt{1-\bar{\lambda}}e^{i\bar{\gamma}_3}\ket{01}\ket{01})$, for some phases $\bar{\gamma}_i$ and coefficient $\bar{\lambda}$. Using now that the real and orthogonal matrices $O_1, O_3$ are of the form $\tilde{O}_i \oplus \proj{01}$ it is easy to see that $\ket{\Psi}\simeq_{LU}\ket{\Phi}$, i.e. Eq. (\ref{Psitilde}) is satisfied iff $\{e^{i\gamma_i}\}_{i=1}^2=
\{e^{i\bar{\gamma}_i}\}_{i=1}^2$ and $\sqrt{1-\lambda}e^{i\gamma_3}=\sqrt{1-\bar{\lambda}}e^{i\bar{\gamma_3}}$.

For the case 2b), where $\rho_{12}=\one$, which implies that $\rho_{34}= \one$ we
write $\ket{\Psi}=\sum_{ij}\ket{i,j}\ket{\psi_{i,j}}$ with
$\bra{\psi_{ij}}\psi_{kl}\rangle=\delta_{ik}\delta_{jl}$. Since
these states form an ON--basis we can find a $4\times 4$ unitary $U$
such that $\ket{\psi_{i,j}}=U\ket{ij}$. Recall that any two--qubit
unitary operator $U$ can be written as $U=U_1 \otimes U_2 U_d V_1 \otimes V_2$,
where $U_d$, the non--local content of $U$, is diagonal in the magic basis, i.e. $U_d=e^{i(\phi_1 X\otimes X+\phi_2 Y\otimes Y+\phi_3 Z\otimes Z)}$, for some phases $\phi_i$ \cite{KrCi01}. Note that $U_d$ can be
made unique by imposing certain conditions on the phases, $\phi_i$ \cite{HaCi00}. We transform the
state by local unitary operations into the form $\one_{12}
\otimes U_d \sum_{ij}\ket{ij} \ket{ij}= \one_{12} \otimes U_d
\ket{\Phi^+}_{13} \ket{\Phi^+}_{24}$. Then the two states, $\ket{\Psi}=\one_{12} \otimes U_d (\Psi)
\sum_{ij}\ket{\Phi^+}_{13} \ket{\Phi^+}_{24}$ and $\ket{\Phi}=\one_{12} \otimes U_d (\Phi)
\sum_{ij}\ket{\Phi^+}_{13} \ket{\Phi^+}_{24}$
are LU--equivalent iff $U_d(\Psi)=U_d(\Phi)$. Thus, these LU--equivalence classes are characterized by $E_{ij}(\ket{\Psi})=2$ for some systems $i,j$ and the three parameters, $\phi_1,\phi_2,\phi_3$, which define the non--local content of $U_d$. In Sec \ref{SecEntanglement} we will give a physical meaning to these parameters and discuss its generalization.

Using the fact that any four--qubit state with $\rho_{12}=\one$ is LU--equivalent to the state $\ket{\Psi}=\one_{12} \otimes
U_d \sum_{ij}\ket{ij} \ket{ij}$, where
$U_d=U_{mb}\mbox{diag}(1,e^{i\phi_1},e^{i\phi_2},e^{i\phi_3})U_{mb}^\dagger$, for some phases $\phi_i$, it is also easy to rederive the result that there
exists no four qubit state with the property that all two--qubit
reduced states are completely mixed, i.e. $\rho_{ij}=\one$ $\forall
i,j$. This can be seen as follows. Since $\ket{\Psi}=\one_{12} \otimes
U_d \sum_{ij}\ket{ij} \ket{ij}$, we find  $\rho_{13}=\tr_{4}(U_d\proj{\Phi^+}_{13} \otimes \one_4
U_d^\dagger)$. Then the conditions $\rho_{13}=\rho_{23}=\one$ imply that
$\cos(\phi_i)=0$ and $\cos(\phi_i-\phi_j)=0$ $\forall i,j$. Since
it is impossible to fulfill those equations simultaneously, we have
that there exists no four--qubit state such that $\rho_{ij}=\one$ $\forall
i,j$. This implies that the case 2b) is actually contained in 2a).
In fact, it corresponds to the case where $\lambda=0$.

In summary, for the four qubit case we have the following possibilities:
\bi \item [1a )] There exist some systems $i$ and $j$ with
$\rho_i\neq \one$ and $\rho_{ij}\neq \rho_i\otimes \one$: Without
loss of generality we choose $i=1$ and $j=2$. Then $W_1=V_1=\one$
and $k_1=0$ in Eq. (\ref{LU}) and $V_2,W_2$ and $k_2$ can be easily
computed using the methods presented in Sec. \ref{SecAdd}.
\item[1b)]  There exists some system $i$ with $\rho_i\neq \one$ and
some system $j$ with $\rho_{ij}= \rho_i\otimes\one$: Without loss
of generality we chose $i=1$ and $j=2$. Then the unitaries
$W_i,V_i$ and $k_i$ for $i=3,4$ can be easily computed by
considering $\rho_{34}$ which can neither be $\one$ nor
$\one-\lambda \proj{\Psi^-}$, for any value of $\lambda$. \ei

In both cases at least for two systems the operators $V_i,W_i$ and
the bit values $k_i$ can be determined. Thus, measuring those two
systems in the computational basis leads to four equations for two--qubit states. The missing
operators $V_i,W_i$ and the bit values $k_i$ can then be easily
computed. The states are LU--equivalent iff there exist some phases
$\alpha_i$ such that Eq. (\ref{LU}) has a solution, which can easily be
checked using Lemma \ref{LemmaPhase}. \bi \item[2a)] For any system $i$, $\rho_i=\one$ and there exists a system $j$ such that $\rho_{ij}\neq
\one$, for some system $i$: Without loss of generality we chose
$i=1$ and $j=2$. First we apply local unitaries to both states,
$\ket{\Psi}$ and $\ket{\Phi}$ such that $\rho_{12}, \rho_{34}$ and
$\sigma_{12},\sigma_{34}$ are all Bell--diagonal (see Sec  \ref{Sectwoquits}).
The resulting states will again be denoted by $\ket{\Psi}$,
$\ket{\Phi}$ respectively. If $\rho_{12}\neq\one-\lambda
\proj{\Psi^-}$ for some $\lambda\in \R$ (which implies that also
that $\rho_{34}\neq\one-\lambda \proj{\Psi^-}$ for some $\lambda\in
\R$), the unitaries $V_i,W_i$ and $k_i$ in Eq. (\ref{LU}) can be
directly computed using the methods of Sec. \ref{SecAdd}. If
$\rho_{12}=\one-\lambda \proj{\Psi^-}$ for some $\lambda\in \R$ we
map $\ket{\Psi}$ into the form
$\ket{\Phi^+}\ket{\Phi^+}+e^{i\gamma_1}\ket{\Phi^-}\ket{\Phi^-}+
    e^{i\gamma_2}\ket{\Psi^+}\ket{\Psi^+}+\sqrt{\lambda}e^{i\gamma_3}\ket{\Psi^-}\ket{\Psi^-}$, for some phases $\gamma_i$ with $\gamma_{1,2}<\pi$. Two states of this form are LU--equivalent iff their complex coefficients which occur here coincide.
\item[2b)]  For any system $i$, $\rho_i=\one$ and there exists a system $j$
such that $\rho_{ij}= \one$, for some system $i$: Without loss of
generality we chose $i=1$ and $j=2$. In this case the state is
LU--equivalent to the state $\ket{\Psi}=\one_{12} \otimes U_d
(\Psi) \sum_{ij}\ket{ij} \ket{ij}$, where
$U_d(\Psi)=U_{mb}\mbox{diag}(1,e^{i\phi_1},e^{i\phi_2},e^{i\phi_3})U_{mb}^\dagger$
can be chosen uniquely. Then two states are LU--equivalent iff
$U_d(\Psi)=U_d(\Phi)$. Note that since there exists no four--qubit state with
$\rho_{ij}=\one$ $\forall i,j$ case 2b) is contained in 2a).
 \ei

Similarly to the three--qubit case, we will consider now the most
complicated case (for the algorithm proposed in \cite{Kr09}) and
show how the unitaries which transform two LU--equivalent states
into each other can be determined. The most complicated case for
four qubits is the one where $\rho_{ij}=\one$, for some systems
$i,j$ which we chose to be $1,2$. Thus, we consider the example
$\ket{\Psi}=\one_{12}\otimes
U_{34}\ket{\Phi^+}_{13}\ket{\Phi^+}_{24}$, where we choose
$U_{34}=U_{mb}\mbox{diag}(1,e^{i\phi},e^{i\phi},1) U_{mb}^\dagger$ such that $\rho_{12}=\rho_{34}=\rho_{23}=\one$.
It can be easily shown that $\rho_{13}=\rho_{24}=1/4(\one+\x\otimes \x+\cos(\phi)(\z\otimes
\z-Y\otimes Y)$. Our aim is to determine $U_i$ such that
$\ket{\Psi}=U_1\ldots U_4 \ket{\Phi}$, where
$\ket{\Phi}=S_1^\dagger \ldots S_4^\dagger \ket{\Psi}$, for some
given unitaries $S_i$. Since $\rho_{13}\neq \one$ and
$\rho_{24}\neq \one$ we can compute $U_3$ ($U_4$) as a function of $U_1$ $(U_2)$ respectively.
Considering Eqs. (\ref{Yi}) for all values of $l$ and $m$ simultaneously we have
$\rho_{13}= 1/4(\one+\x\otimes \x+\cos(\phi)(\z\otimes \z-Y\otimes
Y))=U_1U_3(S_1^\dagger S_1^\dagger\rho_{13}S_1 S_3 )U_1^\dagger
U_3^\dagger$. It is straight forward to see that the last equation can only be fulfilled if $U_1 S_1^\dagger=U_3
S_3^\dagger=\Sigma_{k}$, for $k\in \{0,1,2,3\}$, where $\Sigma_{0}=\one$. Similarly we find
$U_2 S_2^\dagger=U_4 S_4^\dagger=\Sigma_{l}$. Thus, we have  
$U_i=\Sigma_{k_i} S_i$, where $k_1=k_3=k$ and $k_2=k_4=l$. It is
straightforward to show that $\ket{\Psi}=\Sigma_k\Sigma_l
\Sigma_k\Sigma_l \ket{\Psi}$ for certain values of $k,l$ (e.g.
$k=0,l=3$ or $k=1,l\in\{0,1\}$). Again, the reason, why the
unitaries are not uniquely determined using this method is because
of the symmetries of the state.


\subsection{Five--qubit states}

Instead of considering now, similarly to the other cases, all
possible classes of five--qubit states we consider here one of the
hardest examples to illustrate the method presented in \cite{Kr09}.
First, we are going to construct a five--qubit state, $\ket{\Psi}$,
which has the property that all the two--qubit reduced states are
completely mixed. Then, we consider the two states $\ket{\Psi}$ and
$\ket{\Phi}=S_1\otimes\ldots \otimes S_5\ket{\Psi}$ for some local
unitaries $S_i$ and compute the unitaries, $U_i$, which map
$\ket{\Phi}$ into $\ket{\Psi}$, using the algorithm presented in
\cite{Kr09} and summarized in Sec \ref{SecLU}. We will show that
also in this case it will not be necessary to determine any of the
unitaries as a function of some others, but that it will be
possible to determine them directly.

In order to construct the $5$--qubit state, $\ket{\Psi}$ with $\rho_{ij}=\one$ $\forall i,j$ we write $\ket{\Psi}=\ket{0}\ket{\Psi_0}+\ket{1}\ket{\Psi_1}$, where $\langle \Psi_i \ket{\Psi_j}=\delta_{ij}$.
As shown above, any four--qubit
state which has the property that $\rho_{12}=\one$
is LU--equivalent to a state $\one_{12}\otimes U_d \sum_{i,j}
\ket{ij}\ket{ij}$, where $U_d=U_{mb}\mbox{diag}(1,e^{i\alpha_1},e^{i\alpha_2},e^{i\alpha_3})
U_{mb}^\dagger$. Imposing now also that $\rho_{23}=\rho_{14}=\one$ and that the phases $\alpha_i$ fulfill $0\leq \alpha_i< \pi$, we find $\ket{\Psi_0}=\one_{12}\otimes U_{d_1} \sum_{i,j}
\ket{ij}\ket{ij}$, where $U_{d_1}=U_{mb}\mbox{diag}(1,e^{i\alpha_1},e^{i\alpha_1},1)
U_{mb}^\dagger$. It is easy to see that two states, $\ket{\Psi_0},\ket{\Psi_1}$, of this form are orthogonal to each other iff
$\alpha_2=\alpha_1+\pi$. We consider now the $5$ qubit state
$\ket{\Psi}=\ket{0}\ket{\Psi_0}+\ket{1}\ket{\Psi_1}$ with
$\ket{\Psi_0}=\one_{12}\otimes U_{d_1} \sum_{i,j} \ket{ij}\ket{ij}$
and $\ket{\Psi_1}=\one_{12}\otimes U_{d_2} \sum_{i,j}
\ket{ij}\ket{ij}$, with
$U_{d_2}=U_{mb}\mbox{diag}(1,-e^{i\alpha_1},-e^{i\alpha_1},1)
U_{mb}^\dagger$. It is straight forward to show that all two--qubit states, $\rho_{ij}$ are completely mixed. Note that $\ket{\Psi}=\ket{+}(\ket{\Phi^+,\Phi^+}+\ket{\Psi^+,\Psi^+})+e^{i\alpha}\ket{-}(\ket{\Phi^-,\Phi^-}+\ket{\Psi^-,\Psi^-})$.

Let us now consider the state $\ket{\Phi}=S_1\otimes\ldots \otimes
S_5\ket{\Psi}$ for some local unitaries $S_i$ and compute the
unitaries, $U_i$, which map $\ket{\Phi}$ into $\ket{\Psi}$. Since
$\rho_{ij}=\one$ we choose $U_1,U_2$ as parameters. Note that
$\rho_{123}\neq \one$, since it can have at most rank $4$. Thus, we
can compute the unitary $U_3$ as a function of $U_1$ and $U_2$
considering $\rho_{123}$, which can be easily shown to be
$1/8(\one+X\otimes X\otimes X$). We have $U_3^\dagger
\ket{\Psi}=U_1 U_2 \one U_4 U_5\ket{\Phi}$ and therefore
$U_3^\dagger \tr_{\neg 3}(\ket{kl}_{12}\bra{ij}
\proj{\Psi})U_3=\tr_{\neg 3}(U_1^\dagger U_2^\dagger
\ket{kl}_{12}\bra{ij}U_1 U_2 \proj{\Phi})$ for any $i,j,k,l$. It can be easily shown
that $\tr_{\neg 3}(\ket{00}_{12}\bra{11} \proj{\Psi})=X/2$ and that
$\tr_{\neg 3}(U_1^\dagger U_2^\dagger \ket{00}_{12}\bra{11}U_1 U_2
\proj{\Phi})=x S_3 X S_3^\dagger$, where $x$ depends on $U_1,U_2$.
Since only $x$ depends on $U_1$ and $U_2$, $U_3$ can be directly computed
(not only as a function of $U_1,U_2$). We find $ U_3=e^{i\alpha_3 \x}S_3^\dagger$ for some phase $\alpha_3$. Thus, denoting by
$\ket{\tilde{\Psi}}=H_3\ket{\Psi}$  we have
$\ket{\Psi}\simeq_{LU}\ket{\Phi}$ iff there exist local unitaries
$U_1,U_2,U_4,U_5$ and a phase $\alpha_3$ such that
$\ket{\tilde{\Psi}}=U_1U_2 Z(\alpha_3) H S_3^\dagger
U_4U_5\ket{\Phi}$, where $S_3$ is determined. Projecting now the
third system onto $\ket{0}$ we find a state of system 1245 with the
property that $\rho_{24}\propto \one+X\otimes X$ . Imposing then
the necessary condition $\rho_{24}=U_2U_4\sigma_{24}U_2^\dagger
U_4^\dagger$ of LU--equivalence leads immediately to
$U_2=e^{i\alpha_2 X} S_2^\dagger$ and $U_4=e^{i\alpha_4 X}
S_4^\dagger$. Similarly we find the other unitaries. Thus, we have
$\ket{\Psi}\simeq_{LU}\ket{\Phi}$ iff there exist phases $\alpha_i$
such that $\bigotimes_{i=1}^5 H_i\ket{\Psi}=e^{i\alpha_0}
\bigotimes_i Z(\alpha_i)(\bigotimes_{i=1}^5 H_i
S_i^\dagger)\ket{\Phi}$. The existence of the phases can be easily
verify either by looking at the coefficients in the computational
basis, or by employing Lemma \ref{LemmaPhase}.

\section{LOCC incomparability}
\label{SecLOCC}
The results presented here lead also to conditions for the existence of
more general operations transforming one state into the other,
namely Local Operations and Classical Communication (LOCC). This is
due to the fact that two multipartite states, having the same
marginal one--party entropies, are either LU--equivalent, or
LOCC--incomparable \cite{BePo01,Kempe}, i.e. none of the states can be mapped into the other by LOCC.

In this section we will show that for any $n>2$ there exists a pair of $n$--qubit states, $\{\ket{\Psi},\ket{\Phi}\}$ such that for any bipartition $A/B$, where $A$ contains $a$ qubits and $B$ $n-a$ qubits, $eig(\rho_A)=eig(\sigma_A)$, but the states are not LU--equivalent. In particular, the entropies of the reduced states of any  subsystem coincide, i.e. all bipartite
entanglement, measured with the von Neumann entropy of the reduced
states is the same for both states. Since the eigenvalues of all single qubit reduced states coincide those states are not even LOCC comparable \cite{BePo01,Kempe}.  Surprisingly, in those examples $\ket{\Phi}$ will be the complex conjugation (in the computational basis) of $\ket{\Psi}$. That is, for any $n$ ($n>2$) there exist $n$--qubit states which are not even LOCC comparable to its complex conjugate. The consequence of the existence of these states will be discussed in Sec \ref{SecEntanglement}.

Note that, for $3$--qubit states examples of such states have already been presented in \cite{AcAn00}. There, the fact that the states are not LU--equivalent has been proven by employing a polynomial invariant of degree $12$. Here, we will use the necessary and sufficient condition for LU--equivalence presented in \cite{Kr09} and Sec. \ref{SecLU} to prove that the considered $n$--qubit states are not LU--equivalent. First we will present a three--qubit state, $\ket{\Psi}$ which is not LU--equivalent and therefore not even LOCC comparable to its complex conjugate. Then we will generalize this example to $n$--qubit states.

We consider the LME states
$\ket{\Psi}=U_{123}(\phi)\ket{+}^{\otimes 3}$ and
$\ket{\Phi}=U_{123}(\phi+\pi)\ket{+}^{\otimes 3}$, where
$U_{123}(\alpha)$ is the three--qubit phase gate defined by
$U_{123}=\one-(1-e^{i\phi})\proj{111}$. As mentioned in Sec \ref{SecLU} an arbitrary LMES, $\ket{\Psi}$, can be easily transformed into its trace
decomposition, $\ket{\Psi_{tr}}$, by applying the local unitary operations
$HZ(\phi_i)$, where $\phi_i$ is chosen such that
$\cot(\phi_i)=\frac{\langle X_i\rangle}{\langle Y_i\rangle}$. For
the symmetric state $\ket{\Psi}$ we find $\langle X_i\rangle=1/4 (3
+ \cos(\phi))$ and $\langle Y_i\rangle= \sin(\phi)/4$ and therefore
$\cot(\phi_i)=\cot(\phi) + 3 \csc(\phi)$ for $i=1,2,3$. For
$\phi=\pi/2$ the marginal entropies of $\ket{\Psi}$ and
$\ket{\Phi}$, which is equivalent to the complex conjugate of
$\ket{\Psi}$ in this case, coincide. However, it is easy to show
that $\ket{\Psi_{tr}}/.\ket{\Psi_{tr}}^\ast$ is not a product state and therefore
the states $\ket{\Psi}$ and $\ket{\Psi}^\ast$ are not
LU--equivalent (see Lemma \ref{LemmaPhase1}). Moreover, due to the fact that the eigenvalues of all the reduced states are the same for $\ket{\Psi}$ and $\ket{\Psi^\ast}$, those two states are not even
LOCC comparable. Note that those two states have the same bipartite
entanglement (considering any bipartite splitting) and the same value for the tangle \cite{CoKu00}, the value of which is the same for a state and its complex conjugate.

Let us now generalize this example to $n$--qubit states
(for $n>2$). That is the two $n$--qubit states,
$\ket{\Psi}=U_{1,\ldots,n}(\pi/2)\ket{+}^{\otimes n}$ and
$\ket{\Phi}=\ket{\Psi^\ast}$ have the property that
$eig(\rho_A)=eig(\sigma_A)$ for any subsystem $A$. However, the states are not even LOCC comparable. In order to prove
that we
first note that the eigenvalues of $\rho_A$ and $\rho^\ast_A$
coincide for any subsystem $A$. Furthermore, since the state is symmetric
with respect to particle exchange, all single qubit reduced states coincide. They are of the form $\rho=\proj{+}+
2^{-n}(\sqrt{2}((e^{i\alpha}-1)\ket{1}\bra{+}+(e^{-i\alpha}-1)\ket{+}\bra{1})+|e^{i\alpha}-1|^2\ket{1}\bra{1})$,
with eigenvalues $1/2(1\pm 2^{-n}\sqrt{8-2^{2+n}+2^{2n}})$. Thus,
none of the reduced states is proportional to the identity and
therefore the states are LU--equivalent iff their standard forms
coincide. Since those states are LMESs, we know that their
trace decompositions are of the form $[H Z(\alpha)]^{\otimes
n}\ket{\Psi}$, where $\alpha$ is determined via the equation
$\cot(\alpha)=\frac{\langle X_1\rangle}{\langle
Y_1\rangle}=1-2^{n-1}$. It is straightforward to show that
$\ket{\Psi_{tr}}=e^{i\alpha
n/2}[\cos(\alpha/2)\ket{0}-i\sin(\alpha/2)\ket{1}]^{\otimes
n}+(i-1)e^{i\alpha n/2}[\ket{0}-\ket{1}]^{\otimes n}$ and therefore
none of the coefficients in the computational basis vanishes. There exist now several ways to prove that $\ket{\Psi}$ is not LU--equivalent to its complex conjugate. We could either compute the standard form, $\ket{\Psi_{s}}$, and show that it does not coincide with the one of $\ket{\Psi^\ast}$, i.e. show that $\ket{\Psi_{s}}$ is not real, or we could employ Lemma \ref{LemmaPhase} or Lemma \ref{LemmaPhase1}. We will use here Lemma \ref{LemmaPhase1} to show that the sorted trace decompositions are not related to each other by local phase gates, which proves that the states are not LU--equivalent. Since the first condition (condition (i)), $\vert \bra{\i}\Psi_{tr}\rangle\vert=\vert \bra{\i}\Psi^\ast_{tr}\rangle\vert$ is obviously fulfilled, we have that $\ket{\Psi}\simeq_{LU}\ket{\Psi^\ast}$ iff
$\ket{\Psi}/.\ket{\Psi^\ast}$ is a product state. Since
$\ket{\Psi}$ is symmetric, this last condition is fulfilled iff
there exists a single qubit state $\ket{\phi}$ such that
$\ket{\Psi}/.\ket{\Psi^\ast}=\ket{\phi}^{\otimes n}$. In other words,
the states are LU-equivalent iff there exists two phases $\alpha_0$
and $\alpha_1$ such that
$\ket{\Psi_{tr}}=e^{i\alpha_0}Z(\alpha_1)^{\otimes n}
\ket{\Psi^\ast_{tr}}$. This last equation is fulfilled iff there exists a phase $\alpha_1$ such that $
U_{1,\ldots,n}(\pi/2)\ket{+}^{\otimes n}= e^{i\alpha_0}V_{\alpha_1}^{\otimes
n} U_{1,\ldots,n}(-\pi/2)\ket{+}^{\otimes n}$, where
$V_{\alpha_1}=Z(-\alpha)HZ(\alpha_1)HZ(-\alpha)=\cos(\alpha_1)Z(-2\alpha)+i\sin(\alpha_1)e^{-i\alpha}X$.
Rewriting this condition we have that the states are LU--equivalent iff there exists a phase $\alpha_1$ such that $\ket{+}^{\otimes n}+(i-1)\ket{1}^{\otimes n}=
e^{i\alpha_0}((V_{\alpha_1}\ket{+})^{\otimes n}+ (-i-1)(V_{\alpha_1}\ket{1})^{\otimes
n})$. It can be easily shown that this condition can only be fulfilled if $V_{\alpha_1}\ket{+}\propto \ket{e_1}$ and
$V_{\alpha_1}\ket{+}\propto \ket{e_2}$, where $\ket{e_i}\in
\{\ket{1},\ket{+}\}$, for $i=1,2$ and $e_1\not \propto e_2$. We consider the two possible cases, (a) $V_{\alpha_1}\ket{+}=a\ket{1}$ for some $a\in \C$ and (b) $V_{\alpha_1}\ket{+}=a\ket{+}$ for some $a\in \C$. Case (a) is possible iff $\cos(\alpha_1)+i\sin(\alpha_1)e^{-i\alpha}=0$ and
 $\cos(\alpha_1)e^{-2i\alpha}+i\sin(\alpha_1)e^{-i\alpha}=a$. It is easy to see that the first condition cannot be fulfilled since $\alpha$ is determined as mentioned above. Since $V_{\alpha_1}\ket{+}=a\ket{+}$ (case (b)) implies that $\cos(\alpha_1)=0$ it is also easy to see that in this case $V\ket{1}=\pm ie^{-i\alpha}\ket{0}$ and therefore $V\ket{1}\neq a\ket{e_i}$ with $\ket{e_i}\in \{\ket{1},\ket{+}\}$, for $i=1,2$. This proves that for any $n>2$, the two $n$--qubit states, $\ket{\Psi}$ and $\ket{\Psi^\ast}$ are not LU--equivalent.

\section{LU--equivalence of mixed states and d--level systems}
\label{SecMixed}

We will show here that the criterion of LU--equivalence presented in \cite{Kr09} serves also as
a criterion of LU--equivalence for certain mixed and also for certain multipartite states which describe a system composed out of $d$--level systems.

For instance, if we want to find out whether or not two mixed states are
related to each other by local unitaries and if there exists at
least one eigenvalue of the mixed state which is not degenerate
then the same method can be used. This is due to the fact that the
unitaries cannot change the eigenvalues, thus, if
$\rho\simeq_{LU}\sigma$ then it must hold that
$\ket{\Psi}\simeq_{LU} \ket{\Phi}$, where $\ket{\Psi}$
($\ket{\Phi}$) denote the eigenstates to the non-degenerate eigenvalue of $\rho$, ($\sigma$) respectively. In order to check then if the two mixed states are LU--equivalent, one first uses the algorithm to determine the local unitaries which transform $\ket{\Phi}$ into $\ket{\Psi}$. Those unitaries must also transform $\sigma$ into $\rho$, which can then be easily checked.

The criterion for LU--equivalence for pure states can also be employed for mixed states if there does not exist a non--degenerate eigenvalue, but one which is two--fold degenerate. Let us denote by $\ket{\Psi_0}, \ket{\Psi_1}$ and $\ket{\Phi_0},\ket{\Phi_1}$, the eigenvectors corresponding to the two--fold degenerate eigenvalue  of $\rho$, $\sigma$ respectively. As before we have that if $\rho\simeq_{LU}\sigma$, then there exist local unitaries, $U_i$ such that $\proj{\Psi_0}+\proj{\Psi_1}=\bigotimes_i U_i (\proj{\Phi_0}+\proj{\Phi_1}) \bigotimes_i U^\dagger_i $.
This equation is fulfilled iff there exists a $2\times 2 $ unitary, $V$, with $\ket{\Psi_k}=\bigotimes_i U_i \sum_l V_{kl} \ket{\Phi_l}$, for $k=1,2$.
Note that this is equivalent to finding the local unitaries which map the state $\ket{\Phi}=\ket{0}\ket{\Phi_0}+\ket{1}\ket{\Phi_1}$ into
$\ket{\Psi}=\ket{0}\ket{\Psi_0}+\ket{1}\ket{\Psi_1}$. Thus, solving the LU--equivalence problem of mixed $n$--qubit states, where one eigenvalue is two--fold degenerate is equivalent to solving the LU--equivalence problem of $n+1$ qubit states, where $\rho_1=\one$.

Suppose now that $\rho$ is a $n$--qubit mixed state and its eigenvalue with the smallest degeneracy is $l$--fold degenerate. We denote by $\ket{\Psi_k}$ ($\ket{\Phi_k}$) the eigenstates of $\rho$, ($\sigma$) correspond to this eigenvalue. Then $\rho\simeq_{LU}\sigma$ implies that $\sum_k \ket{k}\ket{\Psi_k}=V \bigotimes_i U_i \sum_k \ket{k}\ket{\Phi_k}$, where $V$ is a $k\times k$ unitary and all the other $U_i$ are single qubit unitary operations. The idea is then to first fix the unitaries $U_i$ using the algorithm presented in \cite{Kr09} and at the end try to fix the unitary $V$.

The LU--equivalence problem for $d$--level systems can be investigated in a similar way. However, due to the additional degeneracy which can occur in this case, the situation gets more complicated. For instance, if a state $\ket{\Psi}$ describes a system composed out of $d$--level systems, then $\rho_i$, which describes a single $d$-level system can be $l$--fold degenerate, where $l\leq d$. Thus, in this case the unitaries occurring in Eq. (\ref{LU}) can in general not be determined up to local phase gates. If there is no degeneracy similar methods can of course be applied to solve the problem of LU--equivalence.

\section{Multipartite entanglement}

\label{SecEntanglement}

The algorithm presented in \cite{Kr09} cannot only be used to solve the LU--equivalence problem, but allows us also to gain a new insight into the entanglement properties of multipartite states.
Within the algorithm the classes $\rho_{n_1,\ldots n_l,k}\neq
\rho_{n_1,\ldots n_l}\otimes \one $ and $\rho_{n_1,\ldots n_l,k}=
\rho_{n_1,\ldots n_l}\otimes \one $, for some subset of qubits, $n_1,\ldots n_l,k$, are distinguished. In the first case the unitary $W_k$ can be computed as a function of the unitaries $U_{n_1},\ldots U_{n_l}$, whereas it cannot (using the proposed algorithm) be computed in the second case. Therefore, in this case a new variable, $U_k$ is required. As explained in \cite{Kr09} those classes correspond also to different entanglement classes. For instance, applying any von Neumann measurement on the first subsystem described by the state $\ket{\Psi}$, with $\rho_{12}=\rho_1\otimes \one$, always results in a state where the second system is maximally entangled with the remaining systems, independent of the measurement outcome. Obviously this is not the case for a state with $\rho_{12}\neq\rho_1\otimes \one$. This suggests that, in order to understand how a many--body system can be entangled, one first identifies the entanglement class (as described above) to which the state belongs to. Note that this classification is based on multipartite, not bipartite entanglement properties. However, it is easy to perform this classification since one only needs to consider the reduced state of certain subsystems. Within the identified entanglement class it is then feasible to understand how multipartite entanglement can be qualified and even quantified. For instance, as we have seen in Sec. \ref{secExamples}, the LU--equivalence classes of four--qubit states with $\rho_{ij}=\one$, for some systems $i$ and $j$ are characterized by three parameters. This is due to the fact that any state in this class (choosing w. l. o. g. $i=1,j=2$) can be written as $\ket{\Psi}=\one_{12}\otimes U_d\ket{\Phi^+}_{13}\ket{\Phi^+}_{24}$, where $U_d=e^{i(\phi_1 X\otimes X+\phi_2 Y\otimes Y+\phi_3 Z\otimes Z)}$, for some phases $\phi_i$. Thus, also the entanglement contained in such a state is completely characterized by $E_{12}(\ket{\Psi})=2$ and the three phases, $\phi_i$. Recall that any two--qubit gate, $U$, can be decomposed as $U=U_1\otimes U_2 U_d V_1\otimes V_2$, with $U_d$ as above denotes the non--local content of the gate $U$. 
Using all that allows us to give the three parameters $\phi_i$ the following physical meaning. Recall that the state $\ket{\Psi}=\one_{12} \otimes U_{34} \ket{\Phi^+}\ket{\Phi^+}$ is the Choi--Jamio\l kowski state corresponding to the operation $U$ \cite{CiDuKrLe00}. That is, given the state $\ket{\Psi}$ the operation $U$ can be implemented using just local operations \footnote{The implementation works as follows. First a system is prepared in the state $\ket{\Psi}$. Then local Bell--measurements are preformed on $\ket{\Psi}$ and the input state, $\rho$. In case the measurement $\ket{\Phi^+}$ is performed, the output state is ${\cal E}(\rho)$. In case any other measurement result is obtained, the output will be ${\cal E}(\Sigma_i\otimes \Sigma_j \rho \Sigma^\dagger_i\otimes \Sigma^\dagger_j)$, for properly chosen local operation $\Sigma_i\otimes \Sigma_j$.}. This shows that the non--local properties of a four--qubit state for which there exists a maximally entangled bipartite splitting between two versus two qubits is completely characterized by the amount of entanglement which can be generated using this state as the only non--local resource.

Of course, this new insight in characterizing multipartite entanglement by the amount of entanglement which can be generated using this state as the only non--local resource can be generalized to an arbitrary state, independent of the dimension and even for mixed states. Note that the quantum operation corresponding to a state describing $n$ subsystems is acting only on $\lceil n/2 \rceil$ systems. Like in the example of four--qubit states, the corresponding operation is acting on two qubits. This fact simplifies the characterization of multipartite entanglement, since e.g. the non--local properties of two--qubit operations are very well understood. It should be further noted here that the operation corresponding to the state, $\ket{\Psi}$ via the Choi--Jamio\l kowski isomorphism is unitary iff the state has the property that it is maximally entangled in the considered bipartite splitting.

Let us point out here that the algorithm gets more and more complicated the larger the number of systems, $l$ is for which $\rho_{n_1,\ldots n_l}=\one$ for any choice of $n_1,\ldots n_l$, since then $l$ unitaries have to be considered as variables. In the worst case, where any bipartition of
$\lceil n/2 \rceil$ qubits is maximally entangled with the rest, $\lceil n/2 \rceil$ unitaries have to be considered as variables.     It is known however, that only for very few values of $n$ such states
exist \cite{Bra03}. On the other hand, the more systems are maximally entangled with the rest, the less parameters remain to characterize the LU--equivalence class. Like, for instance in the example of four qubits, the class with $\rho_{ij}=\one$, for some systems $i,j$ can be characterized with only three parameters. For the other extreme case of generic states all the parameters occurring in the standard form
determine, like in the bipartite case, the entanglement contained
in the state.

Another important insight into multipartite entanglement which we gained here is the fact that for any $n$ there exists a $n$--qubit state, $\ket{\Psi}$, which is not LOCC comparable to its complex conjugate (Sec \ref{SecLOCC}). Thus, the non--local properties of $\ket{\Psi}$ and $\ket{\Psi^\ast}$ seem to be really different. Since the mapping $\ket{\Psi}\rightarrow \ket{\Psi^\ast}$
corresponds to the redefinition of the complex unit $i$ by $-i$, one might expect that this change does not lead to any new physics. In fact, for any observable $O$, we have $\bra{\Psi}O\ket{\Psi}= \bra{\Psi^\ast}O^\ast\ket{\Psi^\ast}$. Thus, whatever measurement outcome we can get by measuring a system described by the state $\ket{\Psi}$, the same outcome can be obtained by measuring $O^\ast$ on $\ket{\Psi^\ast}$. Due to that, there will not exist a physical measure which is capable of distinguishing those two states. This shows that it will not be possible to characterize all LU--equivalence classes by entanglement measures.

This suggest the introduction of a function, $I_1$, with $I_1(\ket{\Psi})=0$ if $\ket{\Psi}\simeq_{LU} \ket{\Psi^\ast}$ and $I_1(\ket{\Psi})=1$ otherwise. If $I_1(\ket{\Psi})=1$ the Hilbert space should be divided into two subsets, one containing $\ket{\Psi}$ and the other containing $\ket{\Psi^\ast}$. After making this distinction one proceeds investigating the non--local properties of the state $\ket{\Psi}$ within the subset associated to it.

In \cite{StCaKr10} we will follow the approach to investigate the multipartite entanglement properties in the way outlined here. In particular, we will consider multipartite state, describing several qubits, and will introduce the function which determines if a state is LU--equivalent to its complex conjugate or not. Moreover, we will analyze the entanglement contained in the state by investigating the amount of entanglement which can be generated using this state as the only non--local resource to implement quantum operations on a smaller system.

\section{conclusion}

We used the criterion of LU--equivalence of multipartite pure states to derive the different LU--equivalence classes of up to four qubits. For five--qubit states, which can be treated analogously, it is shown that the most complicated class of states, where all two--qubit reduced states are completely mixed can be easily considered using the algorithm developed in \cite{Kr09}. Even though it is in principle necessary to determine some of the local unitary operations as a function of some others, it is shown that for those cases this is, in fact, not required. That is, the unitaries can always be directly computed. The algorithm suggests to distinguish  different classes of entangled states, like the one where $\rho_{12}\neq\rho_1\otimes \one$ and where $\rho_{12}=\rho_1\otimes \one$. We considered here all the possible classes and showed that within certain classes new, operational entanglement parameters can be identified which completely characterize the non--local properties of the states. For instance, it has been shown that any four--qubit state for which one two--qubit
reduced state is completely mixed is LU--equivalent to a state
$\ket{\Psi}=\one \otimes U_d \ket{\Phi^+}\ket{\Phi^+}$, where $U_d=e^{i (\alpha_1 \x\otimes \x+ \alpha_2 \y\otimes \y +\alpha_3 \z\otimes \z)}$ with $\alpha_i\in \R$, is the non--local content of a two--qubit gate \cite{KrCi01}. The state $\ket{\Psi}$ is the Choi--Jamio\l kowski state corresponding to the operation $U_d$ \cite{CiDuKrLe00}. Thus, $U_d$ can be implemented by local operations if the state $\ket{\Psi}$ is used as a resource. This new approach of characterizing the entanglement of a multipartite state by the entangling capability of the operation which can be implemented using the state as the only non--local resource can be generalized to arbitrary states. Moreover, we derived examples of $n$--qubit states (for $n>2$) which are not LOCC comparable to their complex conjugate. This observation suggests the introduction of a new measure, which distinguishes the cases $\ket{\Psi}\simeq_{LU} \ket{\Psi^\ast}$ and $\ket{\Psi}\not\simeq_{LU} \ket{\Psi^\ast}$. If the states are not LU--equivalent, two different subsets of the Hilbert space should be considered, one for $\ket{\Psi}$ and one for $\ket{\Psi^\ast}$, in order to further investigate the properties of multipartite entangled states. In \cite{StCaKr10} we will prove that the examples of $n$--qubit states, $\ket{\Psi}$, which are presented here, cannot even be mapped into their complex conjugate by allowing stochastic LOCC (SLOCC). That is, it is not possible to transform $\ket{\Psi}$ into $\ket{\Psi^\ast}$ by local operations even with an arbitrary small probability of success.

\section{Acknowledgement}

The author would like to thank Geza Giedke for careful reading of the manuscript, and  Hans Briegel for continuous support
and interest in this work, and acknowledges support of the FWF
(Elise Richter Program).


\end{document}